\let\Algorithm\algorithm
\renewcommand\algorithm[1][]{\Algorithm[#1]\setstretch{2}}
\newcommand{\figref}[1]{Fig.\,\ref{#1}}
\def\a{\alpha}
\def\mA{\boldsymbol A}
\def\bu{\boldsymbol u}
\def\cc{\boldsymbol c}
\def\Dmin{D_{\min}}
\def\ud{\underline{D}}
\def\E{\mathcal E}
\def\e{\boldsymbol e}
\def\F{\mathbb F}
\def\G{\mathcal G}
\def\mH{\mathcal H}
\def\M{\mathcal M}
\def\P{\mathcal{P}}
\def\p{\mathbf{p}}
\def\R{\mathcal{R}}
\def\S{\mathcal S}
\def\Umin{U_{\min}}
\def\W{\mathcal W}
\def\w{\mathcal W}
\def\X{\boldsymbol X}
\def\V{\mathcal V}
\newtheorem{Lemma}{\textbf{Lemma}}
\newtheorem{Theorem}{\textbf{Theorem}}
\newtheorem{Corollary}{Corollary}
\newtheorem{Definition}{\textbf{Definition}}
\newtheorem{Remark}{\textbf{Remark}}
\newtheorem{Example}{\textbf{Example}}
{\proof}{\proofend}
\newtheorem{property}{Property}
\title{\huge{On the Packet Decoding Delay of Linear Network Coded Wireless Broadcast}}
\author{Mingchao Yu, Alex Sprintson$^\dag$, and Parastoo Sadeghi* \\ \small{*Research School of Engineering, Australian National University, Canberra, Australia\\$^\dag$Department of Electrical and Computer Engineering, Texas A\&M University, Texas, USA\\ \texttt{Emails: \{ming.yu,parastoo.sadeghi\}@anu.edu.au,~spalex@tamu.edu}}}
\begin{document}
\maketitle
\thispagestyle{empty}
\vspace{-2em}

\begin{abstract}
	We apply linear network coding (LNC) to broadcast a block of data packets from one sender to a set of receivers via lossy wireless channels, assuming each receiver already possesses a subset of these packets and wants the rest. We aim to characterize the average packet decoding delay (APDD), which reflects how soon each individual data packet can be decoded by each receiver on average, and to minimize it while achieving optimal throughput. To this end, we first derive closed-form lower bounds on the expected APDD of all LNC techniques under random packet erasures. We then prove that these bounds are NP-hard to achieve and, thus, that APDD minimization is an NP-hard problem. We then study the performance of some existing LNC techniques, including random linear network coding (RLNC) and instantly decodable network coding (IDNC). We proved that all throughput-optimal LNC techniques can approximate the minimum expected APDD with a ratio between 4/3 and 2. In particular, the ratio of RLNC is exactly 2. We then prove that all IDNC techniques are only heuristics in terms of throughput optimization and {cannot guarantee an APDD approximation ratio for at least a subset of the receivers}. Finally, we propose hyper-graphic linear network coding (HLNC), a novel throughput-optimal and APDD-approximating LNC technique based on a hypergraph model of receivers' packet reception state. We implement it under different availability of receiver feedback, and numerically compare its performance with RLNC and a heuristic general IDNC technique. The results show that the APDD performance of HLNC is better under all tested system settings, even if receiver feedback is only collected intermittently.
	
	\vspace{1em}
	\noindent
	\textbf{\emph{Keywords}}: \emph{wireless broadcast, linear network coding, delay, combinatorial optimization, hypergraph coloring}
\end{abstract}

\section{Introduction}
\subsection{Background}
In this paper, we consider a wireless broadcast problem where a sender wishes to broadcast a block of $K$ data packets to a set of $N$ receivers using linear network coding (LNC) \cite{Yeung_flow,li2003linear,koetter2003algebraic,ho:medard:koetter:karger:effros:2006,nistor:lucani:vinhoza:costa:barros:2011}. Each receiver is assumed to already possess a subset of the data packets and still wants all the remaining data packets. The sender transmits, as LNC coded packets, linear combinations of the data packets, so that the receivers can decode their missing data packets through solving linear equations. We are interested in studying two interrelated performance metrics, namely throughput and average packet decoding delay (APDD) across the receivers: Throughput measures the time it takes to deliver the whole data block to all receivers. APDD measures the average time it takes each receiver to decode each data packets in the block. For example:

\begin{Example}
	Consider two receivers, $ r_1$ and $ r_2$, and a block of three data packets $\{\p_1,\p_2,\p_3\}$. $ r_1$ already has $\p_1$ and $ r_2$ already has $\p_2$. They both want the remaining two data packets. Compare two LNC schemes:
	\begin{itemize}
		\item Scheme-1: send $\p_1+\p_2$, and then $\p_3$;
		\item Scheme-2: send $\p_1+\p_2+\p_3$, and then $\p_1+2\p_2+3\p_3$.
	\end{itemize}
	Both schemes allow the two receivers to fully decode after two transmissions. Thus, they offer the same throughput. However, they offer different APDD: Scheme-1 allows both receivers to decode one data packet after both the first and second transmissions, yielding an APDD of $\frac{1+1+2+2}{4}=1.5$. But Scheme-2 does not allow any packet decodings after the first transmission, yielding a larger APDD of $\frac{2+2+2+2}{4}=2$.
\end{Example}

A lower APDD implies faster and smoother data delivery  to the application layer within each data block, and is particularly important in applications where individual data packets are informative. Thus, we are interested in the the minimization of APDD and its relation to throughput. However, as we will review next, APDD minimization is an open problem in the literature.

\subsection{Some Existing LNC Classes}
A well-known class of LNC techniques, which will be studied in this paper, is the class of throughput-optimal LNC techniques, whose packets are innovative (bring new information) to every receiver who has not yet fully decoded all the packets in the block. Throughput-optimal packets can be generated either randomly (i.e., the classic random LNC (RLNC) technique \cite{ho:medard:koetter:karger:effros:2006,lucani:tdd_field:2009,heide_systematic_RLNC}) or deterministically ({e.g., by solving a hitting set problem \cite{kwan2011generation}, or by solving a matroid graphic representation problem \cite{yu2014deterministic}}). All these throughput-optimal techniques could suffer from large APDD due to block decoding: each receiver is only able to decode all data packets at once after receiving sufficient linearly independent coded packets. In other words, there is generally no early packet decodings, which would  help reduce APDD. {In order to provide early packet decodings without sacrificing throughput, Keller \emph{et al} proposed in \cite{keller2008online} a two-step coding technique that adds extra data packets to an instantly decodable but throughput-suboptimal coded packet for throughput optimality.} But this technique was not analytically studied. Indeed, the APDD of throughput-optimal LNC techniques has not been well studied. Only recently, Yu \emph{et al} proved in  \cite{yu:sprintson:sadeghi:netcod2015} that, when there is no packet erasure, RLNC is an APDD 2-approximation technique. That is, its APDD is at most twice of the minimum possible APDD that any LNC techniques can offer.

Another well-known class of LNC techniques, which will also be studied in this paper, is instantly decodable network coding (IDNC). This class aims at reducing APDD through  early packet decodings \cite{Rozner_Heuristic_clique,sorour:valaee:2010,sadeghi:shams:traskov:2010,sorour2015completion, yu2015SIDNC}. The approach is to send, as coded packets, binary XORs of selected data packets, so that a subset of (or all) receivers can instantly decode one wanted data packet from each coded packet. It is well known that IDNC techniques are generally not throughput optimal \cite{costa:munaretto:widmer:baros:2008,li:idnc_video:2011}.  Moreover, \cite{sorour2015completion} proved that it is intractable to  maximize the throughput subject to IDNC constraints. This suboptimal throughput, in turn, can hurt APDD, for it brings large delays to data packets decoded in the final stage of the broadcast. Although a large body of heuristics have been developed as a remedy \cite{neda:parastoo:o2idnc,neda:parastoo:sameh:balance2013,sorour2015graph,sorour2015completion}, it remains an open problem whether or not IDNC techniques are able to at least approximate the optimal throughput and APDD. Only recently, Yu \emph{et al} proved in \cite{yu:sprintson:sadeghi:netcod2015} that the class of strict IDNC techniques \cite{Rozner_Heuristic_clique,sadeghi:shams:traskov:2010,yu2015SIDNC} are not APDD-approximation techniques even when there are no packet erasures.

There are also LNC techniques that strike a balance between throughput and APDD \cite{yu2013rapprochement,yu:parastoo:neda:2014,yu2016feedback}: They partition the packet block into sub-blocks and broadcast each sub-block separately, so that data packets from earlier sub-blocks can be decoded sooner. However, these techniques are generally heuristic. Thus, they will not be studied in this paper.

We also note that index coding (IC) considers a more general system setting than this paper, which assumes that every receiver has a subset of the $K$ data packets and wants one \cite{sprintson:min:2007,sprintson:algorithm:2008,sprintson:ic_nc_matroid:2010,Yossef:index:2011} or some \cite{fragouli:pliable:ic:2013} of the rest. While throughput optimization has been extensively studied in the IC literature, APDD minimization has not previously been considered in the IC literature to the best of our knowledge. Moreover, most works in the IC literature assume no packet erasures. Therefore, our study may also provide new insights into the corresponding problem in the IC context.

\subsection{Contributions}
In summary, APDD minimization for erasure-prone LNC wireless broadcast systems is a highly nontrivial problem. There have not been any comprehensive studies on the APDD performance of LNC techniques, nor any optimal or approximation techniques, only heuristics. We are thus motivated to close this knowledge gap. Specifically, in this paper we achieve the following main contributions:

\begin{enumerate}
	\item {We reveal the APDD performance limits of LNC techniques by deriving closed-form lower bounds on the expected APDD\footnote{{By ``average'' packet decoding delay, we mean the average of the decoding delay of every data packet at every receiver in a given instance of the problem. By ``expected'' APDD, we mean the statistical average of the APDD of all possible instances of the problem generated under random packet erasures.}} of LNC techniques under random packet erasures;}
	\item We prove that APDD minimization is NP-hard;
	\item We prove that RLNC is an APDD 2-approximate technique. We also prove that all throughput-optimal LNC techniques can approximate the minimum expected APDD with a ratio between 4/3 and 2;
	\item We prove that all IDNC techniques cannot approximate the optimal throughput, {and show that they cannot guarantee an APDD approximation ratio for at least a subset of the receivers}; 
	\item {We propose hypergraphic linear network coding (HLNC), a novel throughput-optimal and APDD-approximation LNC technique built upon a novel hypergraph representation of receivers' knowledge space.} Extensive simulations show that it outperforms RLNC and a heuristic general IDNC technique in terms of APDD under all tested system parameter settings. HLNC is implementation-friendly, for it does not require NP-hard coding decision making nor receiver feedback after every transmission.
\end{enumerate}
The rest of this paper is organized as follows:
\begin{itemize}
	\item Section 2 defines the system model, notations, and terminologies;
	\item Section 3 studies the fundamental features of APDD, including lower bounds on the expected APDD, and the NP-hardness of the APDD minimization problem;
	\item Section 4 studies the performance of RLNC and IDNC techniques;
	\item Section 5 proposes HLNC and discusses its implementations under different availability of receiver feedback;
	\item Section 6 numerically compares the APDD performance of HLNC with RLNC and a general IDNC technique. It also numerically demonstrates the feedback reduction capability of HLNC;
	\item Section 7 concludes the paper.
\end{itemize}

\begin{Remark}
	We note that preliminary results of some of the above contributions have been derived in \cite{yu:sprintson:sadeghi:netcod2015} under the limited setting without packet erasures. As we will see later, their extension to settings with packet erasures is non-trivial, and requires completely different derivation approaches.
\end{Remark}

\section{System Model}
\subsection{System Setting}
We consider a block-based wireless broadcast scenario, in which the sender wishes to deliver a block of $K$ data packets, denoted by $\P=\{\p_k\}_{k=1}^K$, to a set of $N$ receivers, denoted by $\R=\{ r_n\}_{n=1}^N$. All data packets are vectors of the same length, with entries taken from a finite field $\F_q$. Time is slotted, and in each time slot the sender broadcasts a data or coded packet to all receivers. The wireless channel between the sender and each receiver $ r_n$ is independent of each other, and is subject to Bernoulli random packet erasures with a probability of $P_{e,n}\geqslant 0$.

We assume that each receiver has already received a subset of packets in $\P$, and still wants all the rest. Such a packet reception state could be the consequence of previous uncoded transmissions \cite{heide_systematic_RLNC}, and is a common assumption in network coding and index coding literature \cite{sprintson:min:2007,sprintson:algorithm:2008,sprintson:ic_nc_matroid:2010,Yossef:index:2011,fragouli:pliable:ic:2013}. This state can be summarized by a binary $N\times K$ state feedback matrix (SFM) $\mA$. Here $\mA(n,k)=1$ means $ r_n$ has missed $\p_k$ and still wants it, and $\mA(n,k)=0$ means $ r_n$ already has $\p_k$.
The set of data packets wanted by $r_n$ is
denoted by $\W_n$. Its size is denoted by $w_n$. An example of $\mA$ is given in \figref{fig:sfm}. There are 6 data packets and 3 receivers. Receivers $\{ r_1, r_2, r_3\}$ all want 3 data packets. 

The sender then applies an LNC technique to help receivers recover their missing data packets. In each time slot, it broadcasts an LNC packet $\X$, which takes the form of:
\begin{equation}
	\X=\sum_{\p_k\in\M}\a_k\p_k,
\end{equation}
where $\M$ is a subset of $\P$ and is called the coding set of $\X$, and $\{\a_k\}$ are non-zero coding coefficients chosen from $\F_q$, i.e., $\{\a_k\}\subseteq\F_q\setminus\{0\}$. Similarly, $\X$ is called a coded packet of $\M$. Moreover, if $\{\a_k\}$ are chosen uniformly at random from $\F_q\setminus\{0\}$, then $\X$ is called a random-coded packet of $\M$. When $\F_q$ is sufficiently large, any receiver who is missing $k$ data packets in $\M$ can decode them from a set of $k$ random-coded packets of $\M$ w.h.p.


\captionsetup{font={small,stretch=1.5}}
\begin{figure}
	\centering
	\subfigure[SFM $\mA$]{\includegraphics{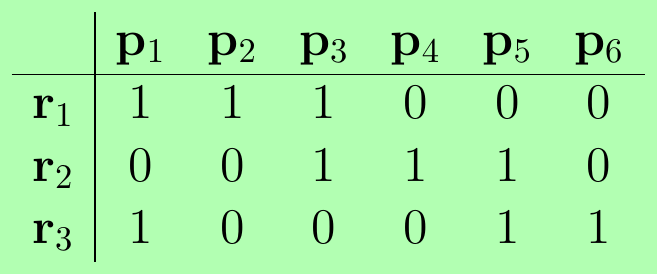}\label{fig:sfm}}\hspace{30pt}
	\subfigure[Hypergraph model $\mH$]{\includegraphics{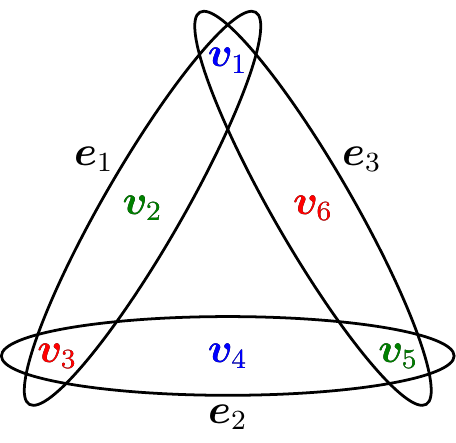}\label{fig:h_color}}
	\caption{An example of state feedback matrix $\mA$ and its hypergraph representation $\mH$. Vertices in $\mH$ are strong-colored, such that all vertices in the same hyperedge have different colors.}
	\label{fig:sfm_h}
\end{figure}

\def\mysum{\mathsf{sum}}

\subsection{Definitions and Terminologies}

Average packet decoding delay (APDD) measures how fast each data packet is decoded on average. Given an instance of the broadcast, the APDD of receiver $ r_n$ is the average time it takes for $ r_n$ to decode a data packet, and is denoted by $D_n$:
\begin{equation}\label{eq:d_def:rec}
	D_n=\frac{1}{w_n}\sum_{\forall k: \p_k\in\W_n}u_{n,k},
\end{equation}
where $u_{n,k}$ is the index of the packet transmission after which $ r_n$ decodes $\p_k$. The APDD across all receivers is similarly defined as
\begin{equation}\label{eq:d_def}
	D=\frac{1}{\mysum(\mA)}\sum_{\forall k,n: \mA(n,k)=1}u_{n,k},
\end{equation}
where $\mysum(\mA)$ is the sum of the entries of $\mA$, and is equal to the number of ones in $\mA$. The expected APDD $E[D_n]$ of a receiver (resp. the expected APDD $E[D]$ across all receivers) under random packet erasures can then be derived by averaging $D_n$ (resp. $D$) over all possible packet erasure patterns.

We further denote by $U_n$ the number of transmissions after which receiver $r_n$ decodes all its wanted data packets. Then, $U\triangleq\max_{n\in[1,N]}U_n$ is the block completion time (BCT). We use BCT to measure system throughput, as throughput is inversely proportional to BCT. The expected $U_n$ and $U$ under random packet erasures are denoted by $E[U_n]$ and $E[U]$, and can be derived by averaging $U_n$ and $U$ over all possible packet erasure patterns.

%

We define an APDD $\beta$-approximation LNC technique as follows:
\begin{Definition}
	An LNC technique is an APDD $\beta$-approximation technique if for any given SFM $\mA$ and packet erasure probabilities $\{P_{e,n}\}_{n=1}^N$, the expected APDD $E[D]$ of using this technique is at most $\beta$ times of the minimum expected APDD $E[D_{\min}]$ that any LNC techniques can offer.
\end{Definition}
Similarly, we define an throughput-approximation LNC techniques as follows:
\begin{Definition}
	An LNC technique is a throughput $\beta$-approximation technique if for any given SFM $\mA$ and packet erasure probabilities $\{P_{e,n}\}_{n=1}^N$, the expected BCT $E[U]$ of using this technique is at most $\beta$ times of the minimum expected BCT $E[U_{\min}]$ that any LNC techniques can offer.
\end{Definition}

\section{{Fundamental APDD Performnace Limits}}\label{sec:approx}
In this section, we will derive lower bounds on the expected APDD of LNC techniques. We will then prove that such lower bounds are NP-hard to achieve, which will indicate that APDD minimization is an NP-hard problem.

Our approach is to investigate the performance and existence of a ``perfect LNC technique'': 
\begin{Definition}
	A LNC technique is perfect if it allows every receiver who is still missing packet(s) to decode one new wanted data packet whenever this receiver successfully receives a coded packet generated using this technique.
\end{Definition}

For example, for the $\mA$ in \figref{fig:sfm_h}(a), a perfect LNC technique could send 3 coded packets when there are no packet erasures: $\X_1=\p_1\oplus\p_4$, $\X_2=\p_2\oplus\p_5$, and $\X_3=\p_3\oplus\p_6$, where $\oplus$ is the binary XOR operation. $\{\X_u\}_{u=1}^3$ allow every receiver to decode one wanted data packet in every transmission. For example, $ r_1$ can decode $\p_1$ from $\X_1$ by performing $\X_1\oplus\p_4$, as it already has $\p_4$.

\subsection{Lower Bound on The Expected APDD}
It is intuitive that the APDD of the perfect LNC technique is a lower bound on the APDD of LNC techniques. We thus denote by $\ud_n$ (resp. $\ud$) the APDD of receiver $r_n$ (resp. across all receiver) using the perfect LNC technique. Their expectations under random packet erasures, namely, $E[\ud_n]$ and $E[\ud]$, are thus the lower bounds on the corresponding expectations across all LNC techniques.

The value of $E[\ud_n]$ is stated in the following theorem:
\begin{Theorem}\label{theo:ud_n}
	When coded transmissions are subject to
	random packet erasures with a probability of $P_{e,n}$, the expected APDD of a receiver $ r_n$ who wants $w_n$ data packets is lower bounded as:
	\begin{equation}\label{eq:ud_n}
		E[\ud_n|\text{when~}r_n~\text{wants}~w_n~\text{data~packets}]=\frac{w_n+1}{2(1-P_{e,n})}
	\end{equation}
\end{Theorem}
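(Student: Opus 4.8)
The plan is to track, for receiver $r_n$ operating under a perfect LNC technique, the exact transmission slot at which each of its $w_n$ wanted data packets becomes decodable, and then average over the Bernoulli erasure process of $r_n$'s channel. First I would observe that, by the definition of a perfect technique, $r_n$ decodes exactly one new wanted packet at each of its successful receptions, and that exactly $w_n$ successful receptions are needed in total: $r_n$'s knowledge space has dimension $K-w_n$ initially and must reach dimension $K$ for all wanted packets to be decoded. Hence, letting $T_1<T_2<\cdots<T_{w_n}$ denote the slot indices of the first $w_n$ successful receptions of $r_n$, the packet decoded at the $j$-th success has decoding delay $u_{n,k}=T_j$, so that
\begin{equation}
	\ud_n=\frac{1}{w_n}\sum_{j=1}^{w_n}T_j .
\end{equation}
(For the companion claim that $\ud_n$ genuinely lower bounds the APDD of \emph{every} LNC technique, one notes that a receiver's knowledge-space rank grows by at most one per successful reception, so after its $j$-th success it has decoded at most $j$ wanted packets; thus the $j$-th smallest value among $\{u_{n,k}:\p_k\in\W_n\}$ is at least $T_j$ for any technique, and summing gives $D_n\geq\ud_n$ slot-by-slot, hence $E[D_n]\geq E[\ud_n]$.)

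Next I would compute $E[T_j]$. Since the channel of $r_n$ is Bernoulli with erasure probability $P_{e,n}$ and independent across slots, the inter-success gaps $T_j-T_{j-1}$ (with $T_0\triangleq 0$) are i.i.d.\ geometric random variables with success probability $1-P_{e,n}$, each of mean $\tfrac{1}{1-P_{e,n}}$. By linearity of expectation, $E[T_j]=\tfrac{j}{1-P_{e,n}}$.

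Substituting into the first display and using $\sum_{j=1}^{w_n}j=\tfrac{w_n(w_n+1)}{2}$ yields
\begin{equation}
	E[\ud_n]=\frac{1}{w_n}\sum_{j=1}^{w_n}\frac{j}{1-P_{e,n}}=\frac{1}{w_n(1-P_{e,n})}\cdot\frac{w_n(w_n+1)}{2}=\frac{w_n+1}{2(1-P_{e,n})},
\end{equation}
which is the claimed identity.

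The geometric-sum computation and the expectation are routine; the only real subtlety is the first step, namely rigorously pinning down that under a perfect technique each of $r_n$'s $w_n$ wanted packets is decoded at a distinct successful reception (one new packet per success, never more and never fewer in aggregate), which is exactly what makes $\ud_n$ both achievable by the perfect technique and a valid lower bound. I would also add a brief remark that the argument implicitly presumes a field size $q$ large enough for a perfect technique to be well defined, or, equivalently, that one may read the result purely as the universal lower bound $E[D_n]\geq\tfrac{w_n+1}{2(1-P_{e,n})}$ established by the rank argument above, which holds for every LNC technique irrespective of $q$.
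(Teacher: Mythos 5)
Your proof is correct, and it reaches the identity by a genuinely different route from the paper. The paper's proof conditions on the block completion time $U_n$: given that the last success occurs at slot $U_n$, all placements of the first $w_n-1$ successes are equally likely, and a mirror-image pairing $u_i\mapsto U_n-u_i$ shows their sum averages to $(w_n-1)U_n/2$; this gives $E[\ud_n\mid (U_n,w_n)]=\tfrac{U_n}{2}+\tfrac{U_n}{2w_n}$, after which $E[U_n\mid w_n]=w_n/(1-P_{e,n})$ is substituted. You instead work unconditionally with the renewal structure of the Bernoulli channel: the inter-success gaps are i.i.d.\ geometric with mean $\tfrac{1}{1-P_{e,n}}$, so $E[T_j]=\tfrac{j}{1-P_{e,n}}$ directly, and the arithmetic sum finishes the computation. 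Your route is shorter and avoids the paper's (correct but only asserted) claim that all success patterns are equiprobable given $U_n$, replacing it with plain linearity of expectation. You also supply something the paper's appendix omits: an explicit argument that $\ud_n$ lower bounds \emph{every} LNC technique, via the observation that the knowledge-space rank grows by at most one per successful reception, so the $j$-th smallest decoding delay is at least $T_j$ pathwise. That pointwise domination is exactly what justifies calling \eqref{eq:ud_n} a lower bound, and is a worthwhile addition; your closing caveat about field size (or reading the result purely as the universal bound) is also apt. Both approaches are sound; yours trades the paper's combinatorial symmetry for a standard renewal argument and is, if anything, easier to verify.
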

The proof is given in Appendix \ref{ap:ud_n}. In the rest of the paper, we will simplify the notation for such conditional expectations to a form of $E[A|(b,c,\cdots)]$, which stands for the expectation of random variable $A$ when the values of some random variables  are given as $b,c,\cdots$. For example, the expectation in the above equation can be simplified to $E[\ud_n|w_n]$.

Then, by averaging the APDD of all receivers with their $w_n$ as weights, we obtain a lower bound on the expected APDD $D$ of any $\mA$ under any $\{P{e,n}\}$:
\begin{Corollary}
	The expected APDD of a wireless broadcast instance with given $\mA$ and $\{P_{e,n}\}$ is lower bounded as:
	\begin{equation}
		E[\ud|\{w_n\}_{n=1}^N]=\frac{\sum_{n=1}^N\frac{w_n^2+w_n}{2(1-P_{e,n})}}{\sum_{n=1}^N{w_n}}\label{eq:ud_a},
	\end{equation}
	where $\{w_n\}$ is obtained from $\mA$.
\end{Corollary}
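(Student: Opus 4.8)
The plan is to reduce the across-receivers quantity to the per-receiver quantities that Theorem~\ref{theo:ud_n} already controls, exploiting the fact that both the actual APDD $D$ and the perfect-technique APDD $\ud$ are, by construction, the $w_n$-weighted averages of their per-receiver counterparts.

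First I would rewrite the defining double sum in \eqref{eq:d_def}. For a fixed receiver $r_n$, the inner sum $\sum_{k:\mA(n,k)=1}u_{n,k}$ is exactly $\sum_{\p_k\in\W_n}u_{n,k}=w_n D_n$ by \eqref{eq:d_def:rec}, and since $\mysum(\mA)$ counts the ones of $\mA$ it equals $\sum_{n=1}^N w_n$. Hence, for any LNC technique and any realization of the erasure process,
\[
D=\frac{1}{\sum_{n=1}^N w_n}\sum_{n=1}^N w_n D_n ,
\]
and in particular the perfect technique satisfies $\ud=\frac{1}{\sum_{n=1}^N w_n}\sum_{n=1}^N w_n\ud_n$. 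Since the weights $\{w_n\}$ are fixed once $\mA$ is given, linearity of expectation then turns this identity into $E[\ud\mid\{w_n\}]=\big(\sum_{n=1}^N w_n E[\ud_n\mid w_n]\big)/\sum_{n=1}^N w_n$. Substituting $E[\ud_n\mid w_n]=\frac{w_n+1}{2(1-P_{e,n})}$ from Theorem~\ref{theo:ud_n} and using $w_n\cdot\frac{w_n+1}{2(1-P_{e,n})}=\frac{w_n^2+w_n}{2(1-P_{e,n})}$ yields exactly \eqref{eq:ud_a}.

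It then remains to confirm that \eqref{eq:ud_a} is genuinely a lower bound. By Theorem~\ref{theo:ud_n}, together with the observation (recorded before that theorem) that the perfect technique lower-bounds the APDD of every receiver, we have $E[D_n]\geqslant E[\ud_n\mid w_n]$ for every $n$ and under every LNC technique; averaging these inequalities with the non-negative weights $w_n$ gives $E[D]\geqslant E[\ud\mid\{w_n\}]$.

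The only step that requires a little care — and the natural main obstacle — is justifying both the extraction of the weights from the expectation and the per-receiver domination $E[D_n]\geqslant E[\ud_n\mid w_n]$. This rests on the elementary rank argument that after any $t$ transmissions a receiver can have decoded no more data packets than the number of coded packets it has successfully received, which is precisely the count the perfect technique attains; consequently the sorted decoding instants of $r_n$ under any technique dominate those under the perfect technique pointwise in the erasure pattern, so the inequality survives the expectation. Everything else is bookkeeping with the weights $w_n$, which are deterministic functions of $\mA$.
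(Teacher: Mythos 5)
Your proof is correct and follows essentially the same route as the paper: the corollary is obtained by taking the $w_n$-weighted average of the per-receiver bounds $E[\ud_n\mid w_n]=\frac{w_n+1}{2(1-P_{e,n})}$ from Theorem~\ref{theo:ud_n}, which the paper states in one line. Your additional justification of the per-receiver domination (a receiver can never decode more packets than it has successfully received, which is exactly what the perfect technique achieves) is a welcome elaboration of what the paper treats as intuitive, but it does not change the argument.
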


Then, by assuming that $\mA$ is the result of one round of uncoded transmission of the $K$ data packets and that all receivers experience the same packet erasure probability $P_e$, we obtain a closed-form relation between the APDD performance limit of LNC techniques and system parameters:
\begin{Theorem}\label{theo:ud_overall}
	
	Given system parameters $K$, $N$, and $P_e$, the overall APDD performance of every NC technique is lower bounded by $E[\ud]$, where
	\begin{align}
		E[\ud]
		&=\frac{1}{2(1-P_e)}\left(1+E\left[\frac{\sum_{n=1}^N w_n^2}{\sum_{n=1}^N w_n}\right]\right),~~~~\{w_n\}_{n=1}^N\sim B(K,P_e)\label{eq:ud_overall}\\
		&\approx\frac{KP_e-P_e+2}{2-2P_e},~~\mathrm{when~}N~\mathrm{is~sufficiently~large}\label{eq:ud_approx}
	\end{align}
\end{Theorem}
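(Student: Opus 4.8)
The plan is to reduce the claim to the Corollary that precedes the theorem and then handle two separate issues: the exact identity \eqref{eq:ud_overall}, which follows by linearity of expectation, and the large-$N$ approximation \eqref{eq:ud_approx}, which follows from a law-of-large-numbers argument on the ratio of sums appearing in \eqref{eq:ud_overall}.

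First I would specialize \eqref{eq:ud_a} to a common erasure probability $P_{e,n}=P_e$. For any fixed realization of $\{w_n\}_{n=1}^N$ this gives $E[\ud\mid\{w_n\}]=\frac{1}{2(1-P_e)}\big(1+\sum_{n}w_n^2/\sum_{n}w_n\big)$, where the ``$1$'' is just $\sum_n w_n/\sum_n w_n$. Since $\mA$ is assumed to be the outcome of one uncoded broadcast round of the $K$ packets over the independent $P_e$-erasure channels, each receiver misses each data packet independently with probability $P_e$, so $w_1,\dots,w_N$ are i.i.d.\ $B(K,P_e)$. Taking the expectation over $\{w_n\}$ and pulling the deterministic factor out of the expectation yields \eqref{eq:ud_overall} immediately.

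For the asymptotic \eqref{eq:ud_approx}, I would rewrite the random ratio as $\big(\tfrac1N\sum_n w_n^2\big)\big/\big(\tfrac1N\sum_n w_n\big)$. By the strong law of large numbers the numerator tends almost surely to $E[w_1^2]$ and the denominator to $E[w_1]=KP_e>0$, so the ratio tends almost surely to $E[w_1^2]/E[w_1]$. To interchange this limit with the expectation I would invoke the deterministic bound $w_n^2\le Kw_n$ (valid because $w_n\le K$), which shows the ratio is bounded by $K$, so bounded convergence applies. It then remains to plug the binomial moments $E[w_1]=KP_e$ and $E[w_1^2]=KP_e(1-P_e)+K^2P_e^2$, hence $E[w_1^2]/E[w_1]=(1-P_e)+KP_e$, into \eqref{eq:ud_overall} and simplify to obtain $E[\ud]\approx\frac{KP_e-P_e+2}{2-2P_e}$.

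The only delicate point, and the part I expect to be the main obstacle to a fully rigorous statement, is the interchange of limit and expectation together with the degenerate event $\sum_n w_n=0$ (all receivers already satisfied), on which the ratio — and indeed the APDD itself — is undefined. This event has probability $(1-P_e)^{NK}$, which vanishes as $N\to\infty$, so it does not affect the limit; conditioning on $\sum_n w_n>0$ (equivalently, on the only regime in which the broadcast problem is nontrivial) removes the ambiguity and makes the bounded-convergence argument clean. Everything else is routine: linearity of expectation and the first two moments of a binomial.
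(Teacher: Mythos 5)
Your proof is correct and reaches the same two reductions the paper uses: \eqref{eq:ud_overall} is obtained by specializing \eqref{eq:ud_a} to a common $P_e$ and averaging over i.i.d.\ $B(K,P_e)$ Wants-set sizes, and \eqref{eq:ud_approx} reduces to showing $E\bigl[\sum_n w_n^2/\sum_n w_n\bigr]\approx KP_e-P_e+1$. Where you differ is in how that last expectation of a ratio is handled. The paper first symmetrizes, writing the quantity as $N\cdot E\bigl[w_1^2/(w_1+\sum_{n=2}^N w_n)\bigr]$, then invokes the law of large numbers to replace $\sum_{n=2}^N w_n$ by its mean $(N-1)KP_e$ \emph{inside} the expectation, and finally drops the $O(1/N)$ perturbation in the denominator to land on $E[w_1^2]/(KP_e)$. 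You instead normalize both sums by $N$, apply the strong law to numerator and denominator separately to get almost-sure convergence of the ratio to $E[w_1^2]/E[w_1]$, and justify the interchange of limit and expectation via the deterministic bound $w_n^2\le Kw_n$ (so the ratio is bounded by $K$) and bounded convergence. Both routes give $E[w_1^2]/E[w_1]=KP_e+1-P_e$ and hence the same closed form. Your version is arguably the more rigorous of the two: the paper's substitution of a random sum by its mean inside an expectation, and the subsequent discarding of the $(w_1-KP_e)/N$ term, are heuristic steps whose error control is never made explicit, whereas your dominated-convergence argument closes exactly that gap. Your observation about the degenerate event $\sum_n w_n=0$ (probability $(1-P_e)^{NK}$), on which the ratio and the APDD itself are undefined, is also a legitimate point the paper silently ignores; conditioning on a nontrivial instance, as you suggest, is the right fix and does not affect the limit.
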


Here each $w_n$ follows a binomial distribution of $B(K,P_e)$ because in the uncoded transmission round $r_n$ will miss each data packet with a probability of $P_e$. \eqref{eq:ud_overall} is a straightforward extension of \eqref{eq:ud_a}. We will prove in Appendix \ref{ap:ud_overall} that $E\left[\frac{\sum_{n=1}^N w_n^2}{\sum_{n=1}^N w_n}\right]\approx KP_e-P_e+1$ when $N$ is sufficiently large, which will prove the approximation in \eqref{eq:ud_approx}.

\begin{Remark}
	Under the more general setting with heterogeneous $\{P_{e,n}\}$, we can apply the smallest (resp. largest) $P_{e,n}$ to \eqref{eq:ud_approx} for a lower  (resp. upper) bound on the expected APDD of the perfect LNC technique.
\end{Remark}

To verify the accuracy of the approximation in \eqref{eq:ud_approx} , we simulate the perfect LNC technique as follows: We first send each data packet uncoded once. Then in each time slot, instead of finding a perfect coded packet, we simply give each unfinished receiver one new wanted data packet if its channel is ``on''. The APDD of this simulated technique is thus $\ud$. We have conducted extensive simulations to obtain the numerical average of $\ud$, and compared it with \eqref{eq:ud_approx}. The results under $K=15$ and $P_e=0.2$ are presented in \figref{fig:bounds}, which shows that our approximation quickly converges to the numerical average when $N$ is only 20. Another interesting observation from both the graph and \eqref{eq:ud_approx} is that the APDD of the perfect LNC technique does not increase with the number $N$ of receivers.

\begin{figure}[ht]
	\centering
	\includegraphics[width=0.7\linewidth]{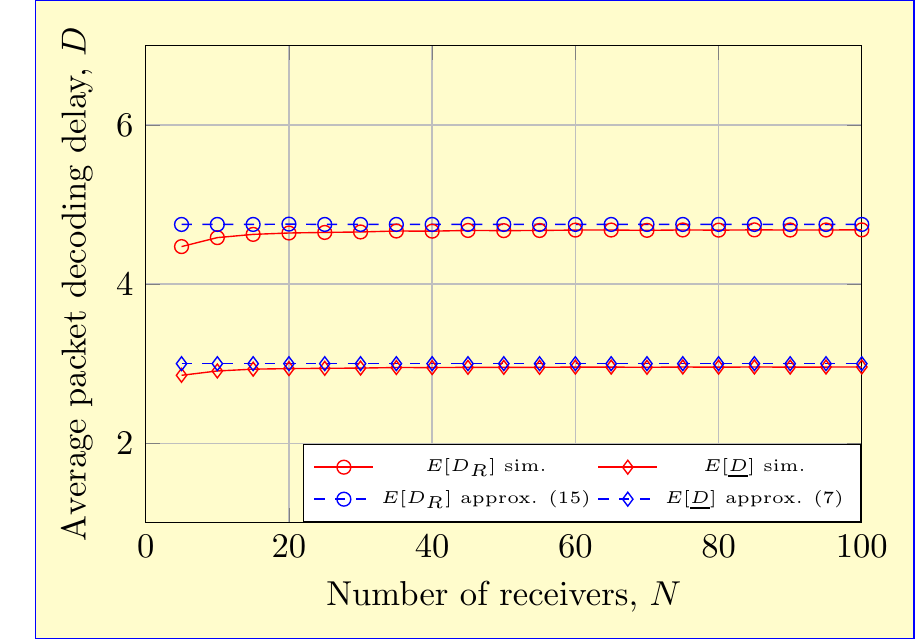}
	\caption{The proposed closed-form approximation of $E[\ud]$ and $E[D_R]$ and their comparison to the simulated values.}
	\label{fig:bounds}
\end{figure}

\subsection{The Hardness of Minimizing the Expected APDD}\label{sec:hardness}
In this section, we prove the NP-hardness of minimizing the expected APDD of a given $\mA$, i.e., $E[D]$. Our approach is through contradiction: If there exists an efficient coding algorithm that minimizes $E[D]$, then this algorithm can efficiently determine the achievability of $E[\ud]$ as well by simply comparing the two quantities. Moreover, such efficiency will hold for the special case where $\{P_{e,n}\}=0$, which will reduce $E[D]$ and $E[\ud]$ to $D$ and $\ud$, respectively. Thus, the achievability of $\ud$ can be efficiently determined. Therefore, to prove that it is hard to minimize $E[D]$, it suffices to prove that it is hard to determine the achievability of $\ud$ when $\{P_{e,n}\}=0$.

{To this end, we define a perfect LNC solution, denoted by $\S_p$, as a set of ordered LNC coded packets that allows every receiver $ r_n$ to decode a new wanted data packet in every transmission, until $ r_n$ has decoded all its wanted data packets. To avoid ambiguity, the coded packets do not contain data packets that have already been decoded by all receivers.}

It is clear that $\ud$ can only be achieved by $\S_p$. Thus, to prove that it is hard to determine the achievability of $\ud$ when $\{P_{e,n}\}=0$, it suffices to prove that it is hard to determine the existence of $\S_p$ when $\{P_{e,n}\}=0$. We now prove this through a reduction to a hypergraph strong coloring problem.

A hypergraph $\mH$ is defined by a pair $(\V,\E)$, where $\V$ is the set of vertices, and $\E$ is the set of hyperedges. Every hyperedge $\e\in\E$ is a subset of $\V$ with size $|\e|\geqslant 1$. A hypergraph is $r-$uniform if every hyperedge $\e\in\E$ has a size of $r$. A size-$k$ strong coloring of $\mH$ is a partition of $\V$ into $k$ subsets $\{\V_i\}_{i=1}^k$, such that $|\V_i\cap\e|\leqslant 1$ for any $\e\in\E$. In other words, if we assign $k$ different colors to the vertices in $\{\V_i\}_{i=1}^k$ respectively, every color appears at most once in every hyperedge.
We prove in Appendix \ref{ap:hardness} that the strong hypergraph coloring problem is hard:

\begin{Lemma}\label{lemma:strong_color}
	It is NP-complete to determine whether an $r-$uniform hypergraph is size-$r$ strong colorable or not, for any $r\geqslant3$.
\end{Lemma}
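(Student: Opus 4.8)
The plan is to prove membership in NP and NP-hardness separately. Membership is immediate: a size-$r$ strong coloring is a partition of $\V$ into $r$ classes, an object of size polynomial in $|\V|$, and checking it only requires verifying, for each hyperedge $\e\in\E$, that its $r$ vertices receive $r$ distinct colors, which is done in linear time. For NP-hardness I would reduce from graph $r$-colorability, i.e.\ deciding whether a given simple graph $G$ satisfies $\chi(G)\leq r$; this is NP-complete for every fixed $r\geq 3$ (the case $r=3$ is classical, and every larger $r$ follows by joining a clique on $r-3$ fresh vertices to $G$).

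Given $G=(V,E)$, I would build an $r$-uniform hypergraph $\mH=(\V,\E)$ as follows: for every edge $e=\{u,v\}\in E$ introduce $r-2$ fresh vertices $w_e^1,\dots,w_e^{r-2}$, and set $\V=V\cup\{w_e^i: e\in E,\ 1\leq i\leq r-2\}$ and $\E=\{\e_e: e\in E\}$ with $\e_e=\{u,v,w_e^1,\dots,w_e^{r-2}\}$ for $e=\{u,v\}$. Since $G$ is simple, $|\e_e|=r$ for every $e$, so $\mH$ is $r$-uniform; this is exactly where the hypothesis $r\geq 3$ enters, as at least one fresh vertex per edge is needed (for $r=2$ the construction degenerates to $G$ itself, and size-$2$ strong coloring is just bipartiteness, which is in P). The construction is clearly polynomial, with $|V|+(r-2)|E|$ vertices and $|E|$ hyperedges. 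I would then prove: $G$ is $r$-colorable iff $\mH$ admits a size-$r$ strong coloring. For the forward direction, given a proper $r$-coloring $c$ of $G$, keep $c$ on $V$ and, for each $e=\{u,v\}$, assign to $w_e^1,\dots,w_e^{r-2}$ the $r-2$ colors of $\{1,\dots,r\}\setminus\{c(u),c(v)\}$ (possible since $c(u)\neq c(v)$); then every hyperedge $\e_e$ receives all $r$ colors, so each color class meets it exactly once, and since fresh vertices of distinct edges never share a hyperedge there is nothing more to check. For the converse, any size-$r$ strong coloring of $\mH$ gives $u$ and $v$ distinct colors whenever $\{u,v\}\in E$, because both lie in $\e_e$; hence its restriction to $V$ is a proper $r$-coloring of $G$.

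Combining the two directions yields NP-completeness of size-$r$ strong colorability of $r$-uniform hypergraphs for every $r\geq 3$. I do not anticipate a genuine obstacle in carrying this out; the only step that warrants a moment of care is the forward direction of the equivalence, where one must argue that the fresh vertices can always be colored consistently — and this holds precisely because the two ``old'' vertices of each hyperedge already occupy two distinct colors, leaving exactly $r-2$ colors for its $r-2$ fresh vertices, each of which sits in that single hyperedge only.
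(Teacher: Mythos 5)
Your proposal is correct and follows essentially the same route as the paper: a reduction from graph $r$-colorability in which each edge $\{u,v\}$ is padded with $r-2$ fresh dummy vertices to form an $r$-uniform hyperedge, with the forward direction using the two freed-up colors argument and the converse restricting the strong coloring to the original vertices. Your version is slightly more complete in that it also states NP membership explicitly and notes why $r\geqslant 3$ is needed, but the core argument is identical.
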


We then build a reduction from the problem of finding a size-$r$ strong coloring for an $r$-uniform hypergraph to the problem of finding a perfect solution for a given instance of SFM $\mA$. Given an $r$-uniform hypergraph $\mH(\V,\E)$ we construct an instance of our problem as follows. For each vertex $v_k$ we introduce a data packet $\p_k$, and for each hyperedge $\e_n$ we introduce a receiver $ r_n$ that wants the data packets/vertices incident to $\e_n$. 
In the resulting SFM $\mA$, every receiver wants $r$ data packets. A 3-uniform hypergraph $\mH$ and the corresponding SFM $\mA$ are depicted in \figref{fig:sfm_h}.

Bases on this construction, we can prove the hardness of finding a perfect solution:
\begin{Theorem}\label{lemma:det_perfect}
	It is NP-complete to determine whether there exists a perfect solution for a given instance $\mA$ of the APDD minimization problem when there are no packet erasures.
\end{Theorem}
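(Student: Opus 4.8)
The plan is to establish both membership in NP and NP-hardness, the latter via the reduction from size-$r$ strong hypergraph coloring already set up above (Lemma~\ref{lemma:strong_color} together with the construction of $\mA$ from $\mH$). For membership in NP, I would take as a certificate an ordered list of $r$ coding sets $\M_1,\dots,\M_r\subseteq\P$ together with binary coefficients (it will follow from the equivalence below that XOR coding is without loss of generality, so the certificate has size $O(rK)$); a verifier simulates the Gaussian elimination carried out by each receiver $r_n$ and checks that exactly one new wanted data packet is decoded after each of the $r$ transmissions, which is clearly polynomial. So the crux is the reduction, i.e.\ proving that the constructed $\mA$ admits a perfect solution if and only if $\mH$ is size-$r$ strong colorable.

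For the ``if'' direction, given a size-$r$ strong coloring $\{\V_1,\dots,\V_r\}$ of $\mH$, I would transmit $\X_u=\bigoplus_{\p_k\in\V_u}\p_k$ for $u=1,\dots,r$. Because $\mH$ is $r$-uniform and the coloring is strong, every hyperedge $\e_n$ meets each color class in exactly one vertex; hence at step $u$ receiver $r_n$ sees, among its still-wanted packets, exactly the unique color-$u$ vertex of $\e_n$, subtracts off everything it already holds, and decodes that packet. An easy induction on $u$ shows every receiver decodes one new wanted packet per transmission and finishes after $r$ steps, so this is a perfect solution.

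The ``only if'' direction is where the real work lies, and I expect it to be the main obstacle. Starting from a perfect solution, I would first argue that it consists of exactly $r$ transmissions $\M_1,\dots,\M_r$ (every receiver wants $r$ packets and decodes at most one per received packet, and after $r$ steps everyone is done). The key structural claim, proved by induction on the step $u$, is that for every receiver $r_n$ exactly one of its still-wanted packets lies in $\M_u$: a single linear equation in two or more unknowns (over any $\F_q$) pins down none of them, while every previously received coded packet, restricted to $r_n$'s current unknowns, is identically zero---its lone wanted unknown at its own step having already been decoded---so nothing old can help; thus if $\M_u$ contained zero or at least two of $r_n$'s still-wanted packets, $r_n$ would decode no new packet at step $u$, contradicting perfectness. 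This makes the map sending each step $u$ to the packet $r_n$ decodes at step $u$ a bijection onto $\W_n=\e_n$, and it moreover forces $\p_k\notin\M_u$ for every $u$ strictly before the step at which $r_n$ decodes $\p_k$. Consequently $c(v_k):=\min\{u:\p_k\in\M_u\}$ is well defined and, for every hyperedge $\e_n\ni v_k$, equals the step at which $r_n$ decodes $\p_k$; hence each hyperedge carries all $r$ colors, each exactly once, so $c$ is a size-$r$ strong coloring (vertices lying in no hyperedge are irrelevant and may be colored arbitrarily). Combining the two directions proves NP-completeness of deciding the existence of a perfect solution, which by the argument of Section~\ref{sec:hardness} also yields that minimizing $E[D]$ is NP-hard.
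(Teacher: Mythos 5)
Your proof is correct and follows the same reduction as the paper: the equivalence between size-$r$ strong colorings of $\mH$ and perfect solutions for the constructed $\mA$, argued in both directions and combined with Lemma~\ref{lemma:strong_color}. The only differences are refinements rather than a different route: you add an explicit NP-membership certificate, and in the ``only if'' direction you justify in detail what the paper asserts tersely (namely that $|\M_u\cap\W_n|=1$ must hold at every step), by showing inductively that previously received coded packets carry no information about the remaining unknowns and by handling possible reappearances of a packet across coding sets via $c(v_k)=\min\{u:\p_k\in\M_u\}$.
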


\begin{proof}
	First, we prove that a size-$r$ strong coloring $\{\V_i\}_{i=1}^r$ of $\mH$ implies a perfect solution $\S_p$ for $\mA$ in our problem. Since for every hyperedge it holds that $|\e_n|=r$ and there are $r$ colors, we have $|\V_i\cap\e_n|=1$. Let $\{\M_i\}_{i=1}^r$  be the sets of packets corresponding to $\{\V_i\}_{i=1}^r$. Then, we have \mbox{$|\M_i\cap\w_n|=1$} for every receiver $ r_n$. Hence, the sum (e.g., binary XOR) of all data packets from $\M_i$ is a coded packet, denoted by $\X_i$, that allows every receiver to decode a wanted data packet. Therefore, $\{\X_i\}_{i=1}^r$ together form a perfect solution $\S_p$ to our problem.
	
	Next, we prove that a perfect solution $\S_p$ for the $\mA$ in our problem implies a size-$r$ strong coloring $\{\V_i\}_{i=1}^r$ of $\mH$. Since every receiver wants $r$ data packets, $\S_p$ contains $r$ network coded packets $\{\X_i\}_{i=1}^r$. In order to allow every receiver to decode one wanted data packet from $\X_i$, the coding set $\M_i$ of $\X_i$ must contain exactly one wanted data packet of every receiver, i.e., $|\M_i\cap\w_n|=1$. Let $\{\V_i\}_{i=1}^r$ be the sets of vertices corresponding to $\{\M_i\}_{i=1}^r$. Then it holds that $|\V_i\cap\e_n|=1$ for every hyperedge. Thus, $\{\V_i\}_{i=1}^r$ is a size-$r$ strong coloring of $\mH$.
	
	Our construction above indicates that an $r$-uniform hypergraph is size-$r$ strong colorable if and only if there exists a perfect solution of the instance $\mA$ of our problem. This result, together with Lemma~\ref{lemma:strong_color}, prove the NP-hardness of determining the existence of $\S_p$.
\end{proof}
Since $\ud$ can only be achieved by a perfect solution, the above theorem immediately yields the following corollary:
\begin{Corollary}
	It is NP-compete to determine whether $\ud$ is achievable for a given instance $\mA$ of the APDD minimization problem when there are no packet erasures.
\end{Corollary}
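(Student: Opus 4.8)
The plan is to obtain this corollary as a direct consequence of Theorem~\ref{lemma:det_perfect}, the bridge being the equivalence, in the erasure-free regime, between \emph{achievability of $\ud$ for $\mA$} and \emph{existence of a perfect solution $\S_p$ for $\mA$}. So the real content of the proof is to pin down this biconditional precisely in both directions, after which the hardness transfers verbatim from Theorem~\ref{lemma:det_perfect}; membership in NP is then handled by exhibiting a polynomially-bounded certificate.

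For the easy direction I would note that if $\S_p$ exists for $\mA$, then broadcasting its coded packets in order lets every receiver $ r_n$ decode exactly one new wanted data packet after each of the first $w_n$ transmissions, so $\{u_{n,k}\}_{\p_k\in\w_n}=\{1,2,\dots,w_n\}$ and $D_n=\frac1{w_n}\sum_{j=1}^{w_n}j=\frac{w_n+1}{2}$; averaging over receivers and comparing with \eqref{eq:ud_a} at $P_{e,n}=0$ (where $E[\ud]=\ud$ since the quantity is deterministic) shows the realized APDD equals $\ud$. For the converse --- that $\ud$ can \emph{only} be attained by a perfect solution --- I would use a simple tightness argument: for any LNC scheme the decoding indices $\{u_{n,k}\}_{\p_k\in\w_n}$ are $w_n$ distinct positive integers, so
\[ \sum_{\p_k\in\w_n} u_{n,k} \;\geqslant\; 1+2+\cdots+w_n \;=\; \frac{w_n(w_n+1)}{2}, \]
with equality iff $ r_n$ decodes a new wanted packet after each of transmissions $1,\dots,w_n$. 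Summing over $n$ and dividing by $\mysum(\mA)=\sum_n w_n$ gives $D\geqslant\ud$ with equality iff this holds simultaneously for every receiver, i.e.\ iff the scheme is a perfect solution. This establishes the biconditional.

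Given the biconditional, deciding achievability of $\ud$ for $\mA$ when $\{P_{e,n}\}=0$ is exactly deciding whether $\mA$ admits $\S_p$, which is NP-complete by Theorem~\ref{lemma:det_perfect}; in particular NP-hardness follows. For membership in NP I would take the certificate to be a perfect solution itself: by the definition of $\S_p$ its coded packets involve only still-wanted data packets, there are at most $\max_n w_n\leqslant K$ of them, each described by its coding set and coefficients, and one verifies in polynomial time --- by Gaussian elimination in each receiver's knowledge space --- that every receiver decodes one new wanted packet per transmission, which by the tightness argument certifies that the realized APDD equals $\ud$.

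I do not anticipate a substantive obstacle here: the hard combinatorics lives in Lemma~\ref{lemma:strong_color} and Theorem~\ref{lemma:det_perfect}, and what remains is essentially bookkeeping. The two points that require a little care are (i) the equality case of the rearrangement-type inequality above, which is what forces an optimal scheme to be perfect, and (ii) keeping the NP-certificate polynomial in the input size, handled by restricting without loss of generality to coded packets supported on $\bigcup_n\w_n$ (and, if one insists, noting that $\F_2$ already suffices for perfect solutions of the instances produced by the reduction, so the field size is not an issue for the hardness direction).
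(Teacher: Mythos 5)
Your proposal takes essentially the same route as the paper: the paper derives this corollary in one line from Theorem~\ref{lemma:det_perfect} via the assertion that ``$\ud$ can only be achieved by a perfect solution,'' and your contribution is to actually prove that biconditional (plus NP membership), which is a worthwhile filling-in rather than a different approach. One small repair is needed in your converse direction: the decoding indices $\{u_{n,k}\}_{\p_k\in\w_n}$ are \emph{not} necessarily distinct --- a receiver missing $\p_1,\p_2$ that receives $\a_1\p_1+\a_2\p_2$ and then $\a_1'\p_1+\a_2'\p_2$ decodes both at transmission $2$ --- so you cannot invoke distinctness directly. The inequality $\sum_{\p_k\in\w_n}u_{n,k}\geqslant \frac{w_n(w_n+1)}{2}$ still holds, because after $t$ erasure-free transmissions $ r_n$ holds at most $t$ linear equations and hence can have decoded at most $t$ packets, so the $j$-th smallest decoding index is at least $j$; equality then forces exactly one new decoding per transmission for $t=1,\dots,w_n$, i.e.\ a perfect solution, and the rest of your argument goes through unchanged.
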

As we have discussed at the beginning of this section, a coding algorithm that can efficiently minimize $E[D]$ can also efficiently determine the achievability of $\ud$ when there are no packet erasures. This relation, together with the above corollary, yields the hardness of APDD minimization problem:
\begin{Theorem}
	It is NP-hard to minimize the expected APDD of a given $\mA$ when coded transmissions are subject to random packet erasures.
\end{Theorem}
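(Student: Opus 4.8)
The plan is to argue by contradiction, reducing the general erasure-prone minimization problem to the erasure-free special case that the preceding results have already shown to be intractable. Suppose, for contradiction, that there is a polynomial-time algorithm $\mathcal{A}$ that, given any SFM $\mA$ and any erasure probabilities $\{P_{e,n}\}_{n=1}^N$, returns the minimum expected APDD $E[D_{\min}]$ achievable over all LNC techniques (the numerical value suffices — no witnessing scheme is needed). In particular, $\mathcal{A}$ handles the instances with $P_{e,n}=0$ for all $n$, in which every transmitted coded packet reaches every receiver, so $E[D]$ collapses to the deterministic quantity $D$ and the lower bound $E[\ud]$ collapses to $\ud$.

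Next, for such an erasure-free instance I would compute $\ud$ from the closed form in the Corollary following Theorem~\ref{theo:ud_n} with $P_{e,n}=0$, i.e. $\ud = \big(\sum_{n} (w_n^2+w_n)\big)\big/\big(2\sum_{n} w_n\big)$, which is polynomial-time computable from $\mA$ since each $w_n$ is just a row sum. I would then run $\mathcal{A}$ on the instance to obtain the optimal value $D^{*}$ and compare it with $\ud$. Since $\ud$ is a lower bound on the APDD of every LNC technique, $D^{*}\geqslant\ud$ always holds; and, as noted in the excerpt, $\ud$ is attained exactly when a perfect solution $\S_p$ exists. Hence $D^{*}=\ud$ if and only if $\S_p$ exists, so a single numerical comparison decides the existence of a perfect solution for $\mA$ in polynomial time.

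Finally, I would invoke Theorem~\ref{lemma:det_perfect} (equivalently its Corollary on the achievability of $\ud$), which establishes that deciding the existence of $\S_p$ for an erasure-free instance is NP-complete. Combining this with the polynomial-time decision procedure just built would yield $\mathrm{P}=\mathrm{NP}$, a contradiction. Therefore no polynomial-time algorithm can minimize $E[D]$ unless $\mathrm{P}=\mathrm{NP}$, i.e., minimizing the expected APDD of a given $\mA$ under random packet erasures is NP-hard.

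The step that most deserves care is the equivalence "$D^{*}=\ud$ iff $\S_p$ exists": one must confirm that the bound of Theorem~\ref{theo:ud_n} is tight precisely on perfect solutions and is never merely approached in the limit — which holds because each decoding index $u_{n,k}$ is a positive integer and a perfect solution terminates in exactly $\max_n w_n$ transmissions — and that restricting an algorithm for the general erasure-prone problem to the sub-family of erasure-free instances is legitimate (NP-hardness of the general problem follows a fortiori from hardness on a sub-family). Everything else — the closed form for $\ud$, the reduction from strong hypergraph coloring, and the NP-completeness of strong coloring — is supplied by the earlier results and needs no re-derivation.
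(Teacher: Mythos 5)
Your proposal is correct and follows essentially the same route as the paper: assume a polynomial-time minimizer, restrict to the erasure-free sub-family, compare the returned optimum against the closed-form lower bound $\ud$, and observe that equality holds precisely when a perfect solution exists, which Theorem~\ref{lemma:det_perfect} (via the hypergraph strong-coloring reduction) shows is NP-complete to decide. Your extra remark on why the bound is attained exactly (rather than only in the limit) by perfect solutions is a point the paper asserts without elaboration, but it does not change the argument.
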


Given the NP-hardness of APDD minimization, it becomes critical to investigate whether APDD can be approximation or not, as well as its impacts on throughput optimality. To this end, we study two well known classes of LNC techniques that aim at throughput optimization and APDD reduction, respectively.

\section{Existing Techniques}\label{sec:rlnc_idnc}
In this section, we study the throughput and APDD performance of two classes of LNC techniques, including RLNC, which is throughput optimal, and IDNC, which aims at APDD reduction. We will derive the expected APDD of RLNC, and use the results to prove that RLNC is an APDD 2-approximation technique. On the other hand, we will prove that IDNC techniques cannot approximate throughput, {and can not guarantee an APDD approximation ratio for at least a subset of the receivers.}

\subsection{RLNC}
Since RLNC is throughput optimal and requires block decoding, a receiver who wants $w_n$ data packets is able to decode them altogether after receiving $w_n$ coded packets, which is expected to take $\frac{w_n}{1-P_e}$ coded transmissions. Hence, the expected APDD of $ r_n$ under RLNC is:
\begin{equation}\label{eq:dr_n}
	E[D_{R,n}|w_n]=\frac{w_n}{1-P_{e,n}}
\end{equation}
where the ``R'' in the subscript of $D_{R,n}$ stands for RLNC.

Averaging $E[D_{R,n}|w_n]$ over all receivers yields the expected APDD of a given $\mA$ under RLNC:
\begin{align}\nonumber
	E[D_R|\{w_n\}_{n=1}^N]&=\frac{E[D_{R,1}|w_1]w_1+E[D_{R,2}|w_2]w_2+\cdots+E[D_{R,N}|w_N]}{w_1+w_2+\cdots+w_N}\\
	&=\frac{\sum_{n=1}^N\frac{w_n^2}{1-P_{e,n}}}{\sum_{n=1}^N{w_n}}\label{eq:dr_a}
\end{align}

Then, by assuming that $\mA$ is the consequence of one round of uncoded transmissions of the $K$ data packets, we obtain the following theorem on the overall expected APDD performance of RLNC:
\begin{Theorem}\label{theo:rd_overall}
	Given system parameters $K$, $N$, and $P_e$, the overall APDD performance of RLNC is:
	\begin{align}
		E[D_R]
		&=\frac{1}{1-P_e}E\left[\frac{\sum_{n=1}^N w_n^2}{\sum_{n=1}^N w_n}\right],~~~~\{w_n\}_{n=1}^N\sim B(K,P_e)\label{eq:ur_overall}\\
		&\approx\frac{KP_e-P_e+1}{1-P_e},~~~~\mathrm{when~}N~\mathrm{is~sufficiently~large}\label{eq:dr_approx}
	\end{align}
\end{Theorem}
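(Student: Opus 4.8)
The plan is to prove the statement in two stages: first the exact identity \eqref{eq:ur_overall}, then the large‑$N$ approximation \eqref{eq:dr_approx}. For the exact identity I would start from the conditional APDD \eqref{eq:dr_a} derived just above and specialize it to the homogeneous‑erasure setting $P_{e,n}=P_e$ for all $n$, which turns it into
\[
E[D_R\mid\{w_n\}_{n=1}^N]=\frac{1}{1-P_e}\cdot\frac{\sum_{n=1}^N w_n^2}{\sum_{n=1}^N w_n}.
\]
The substantive content behind this is the per‑receiver formula \eqref{eq:dr_n}: since RLNC is throughput‑optimal over a sufficiently large field, every successfully received coded packet is innovative to every still‑unfinished $r_n$, so $U_n$ equals the transmission index of the $w_n$‑th successful reception on channel $n$ — a negative‑binomial variable with mean $w_n/(1-P_e)$ — and block decoding forces $u_{n,k}=U_n$ for every $\p_k\in\W_n$, whence $E[D_{R,n}\mid w_n]=w_n/(1-P_e)$; \eqref{eq:dr_a} then follows by linearity of expectation, which remains valid despite the dependence of the $\{u_{n,k}\}$ across receivers.

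Next, because $\mA$ is the outcome of one uncoded broadcast round over independent Bernoulli$(P_e)$ channels, the $w_n=\sum_k\mA(n,k)$ are i.i.d.\ $B(K,P_e)$, so the tower rule gives $E[D_R]=E_{\{w_n\}}\!\bigl[E[D_R\mid\{w_n\}]\bigr]=\tfrac{1}{1-P_e}\,E\!\bigl[\tfrac{\sum_n w_n^2}{\sum_n w_n}\bigr]$, i.e.\ \eqref{eq:ur_overall}; the null event $\sum_n w_n=0$, of probability $(1-P_e)^{NK}$, is disposed of by declaring the APDD of a trivial instance to be $0$. For the approximation I would write $\tfrac{\sum_n w_n^2}{\sum_n w_n}=\tfrac{\frac1N\sum_n w_n^2}{\frac1N\sum_n w_n}$ and apply the strong law of large numbers: numerator and denominator converge almost surely to $E[w_n^2]=KP_e(1-P_e)+K^2P_e^2$ and $E[w_n]=KP_e$, so the ratio tends a.s.\ to $E[w_n^2]/E[w_n]=KP_e-P_e+1$. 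To upgrade this to convergence of the expectation I would use the uniform bound $\tfrac{\sum_n w_n^2}{\sum_n w_n}\le\max_n w_n\le K$ and bounded convergence — this is exactly the fact $E\!\bigl[\tfrac{\sum_n w_n^2}{\sum_n w_n}\bigr]\to KP_e-P_e+1$ already established in Appendix~\ref{ap:ud_overall} for Theorem~\ref{theo:ud_overall}, so it can simply be invoked — and dividing by $1-P_e$ yields \eqref{eq:dr_approx}.

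The only delicate point is that last step: a ratio of empirical means is not itself an empirical mean, so the law of large numbers does not immediately give convergence in mean. The uniform boundedness by $K$ settles this cleanly, and the argument is identical to the one already needed for the lower‑bound Theorem~\ref{theo:ud_overall}; indeed the two theorems hinge on the same single nontrivial limit, differing only by the constant factor $\tfrac{1}{1-P_e}$ versus $\tfrac{1}{2(1-P_e)}$ and the additive "$+1$" in the numerator. Everything else — specializing \eqref{eq:dr_a}, applying the tower rule, and computing the binomial second moment — is routine bookkeeping.
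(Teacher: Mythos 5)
Your proposal is correct and follows essentially the same route as the paper: the per-receiver formula \eqref{eq:dr_n} from block decoding under a negative-binomial reception time, the $w_n$-weighted average \eqref{eq:dr_a} specialized to homogeneous $P_e$, and the limit $E\left[\frac{\sum_n w_n^2}{\sum_n w_n}\right]\approx KP_e-P_e+1$, which the paper establishes once in Appendix~\ref{ap:ud_overall} and reuses here exactly as you propose. Your added care about the null event $\sum_n w_n=0$ and the uniform bound by $K$ justifying convergence of the expectation are minor rigor improvements over the paper's law-of-large-numbers substitution, not a different argument.
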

The accuracy of our approximation is again confirmed by the simulation results plotted in \figref{fig:bounds}.

By comparing the above 3 expected APDD under RLNC with the corresponding lower bounds (\eqref {eq:ud_n},  \eqref{eq:ud_a}, \eqref{eq:ud_approx}), we can easily verify that the ratio between the expected APDD of using RLNC and the lower bound converges to 2 from below with increasing $w_n$ or $K$. Thus, the expected APDD of using RLNC is at most twice of the minimum expected APDD. We thus have the following theorem:
\begin{Theorem}\label{theo:rlnc}
	RLNC is an APDD 2-approximation technique when coded transmissions are subject to random packet erasures.
\end{Theorem}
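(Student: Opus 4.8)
The plan is to show that the ratio $E[D_R]/E[\ud]$ is always at most $2$, by comparing the two quantities at the most granular level possible, namely per-receiver, before aggregating. The key observation is that Theorem~\ref{theo:ud_n} gives $E[\ud_n|w_n]=\frac{w_n+1}{2(1-P_{e,n})}$ while \eqref{eq:dr_n} gives $E[D_{R,n}|w_n]=\frac{w_n}{1-P_{e,n}}$, so the per-receiver ratio is exactly $\frac{2w_n}{w_n+1}$, which is strictly less than $2$ for every finite $w_n\geqslant 1$ and tends to $2$ as $w_n\to\infty$. This already establishes the claim "from below" for a single receiver.

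First I would handle the aggregation carefully, because the overall expected APDD is a \emph{weighted} average of the per-receiver quantities (with weights $w_n$), not a plain average, and the lower bound $E[\ud]$ uses the same weights. Concretely, from \eqref{eq:ud_a} and \eqref{eq:dr_a} we have, for a fixed SFM $\mA$ and erasure vector $\{P_{e,n}\}$,
\begin{equation}
\frac{E[D_R|\{w_n\}]}{E[\ud|\{w_n\}]}=\frac{\sum_{n=1}^N \frac{w_n^2}{1-P_{e,n}}}{\sum_{n=1}^N \frac{w_n^2+w_n}{2(1-P_{e,n})}}=\frac{2\sum_{n=1}^N \frac{w_n^2}{1-P_{e,n}}}{\sum_{n=1}^N \frac{w_n^2+w_n}{1-P_{e,n}}}.
\end{equation}
Since each term satisfies $\frac{w_n^2}{1-P_{e,n}}\leqslant \frac{w_n^2+w_n}{1-P_{e,n}}$ with all weights nonnegative, the numerator is at most the denominator, so the ratio is at most $2$; and because $E[\ud]$ is by Theorem~\ref{theo:ud_n} and its corollary a valid lower bound on the expected APDD of \emph{any} LNC technique, in particular $E[\ud]\leqslant E[D_{\min}]$, we get $E[D_R]\leqslant 2\,E[\ud]\leqslant 2\,E[D_{\min}]$. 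This is exactly the definition of an APDD $2$-approximation technique, so the theorem follows for arbitrary $\mA$ and $\{P_{e,n}\}$; the averaging over $\{w_n\}\sim B(K,P_e)$ in Theorems~\ref{theo:ud_overall} and \ref{theo:rd_overall} is then just a special case and need not be invoked.

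The one subtlety I would be careful about is the direction of the inequality $E[\ud]\leqslant E[D_{\min}]$: I must confirm that the "perfect LNC technique" genuinely lower-bounds every LNC technique receiver-by-receiver in expectation, which the text asserts as "intuitive" but which rests on the fact that a receiver wanting $w_n$ packets cannot decode its $j$-th wanted packet before its $j$-th successful reception, giving $E[u_{n,k}]\geqslant$ the perfect-technique value after summing over the $w_n$ packets. Granting that (it is established by Theorem~\ref{theo:ud_n} and the preceding discussion), the rest is the elementary weighted-average bound above. I do not expect any real obstacle here — the main "work" is simply being precise that the comparison is termwise and that the weights in $E[D_R]$ and $E[\ud]$ coincide, so that no rearrangement-type inequality or correlation between $w_n$'s across receivers is needed; the convergence of the ratio to $2$ from below is immediate from $\frac{2w_n}{w_n+1}\uparrow 2$.
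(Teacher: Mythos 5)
Your proposal is correct and follows essentially the same route as the paper: compare \eqref{eq:dr_n} with \eqref{eq:ud_n} and \eqref{eq:dr_a} with \eqref{eq:ud_a} termwise to get a ratio of $\frac{2w_n}{w_n+1}\leqslant 2$ per receiver (hence at most $2$ after the common $w_n$-weighted aggregation), then invoke $E[\ud]\leqslant E[D_{\min}]$. You merely make explicit the termwise inequality and the lower-bound direction that the paper leaves as ``easily verified,'' which is a welcome but not substantively different elaboration.
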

%
Since RLNC requires block decoding, any throughput optimal LNC techniques that can provide early packet decodings will outperform RLNC in terms of APDD. Thus, all LNC techniques can approximation APDD with a ratio of at most 2. Next, we will raise an example in which this ratio is 4/3. These two results yields the following bounds on the approximation ratio of throughput-optimal LNC techniques:

\begin{Theorem}\label{theo:optimal}
	The APDD approximation ratio of all throughput-optimal LNC techniques is between $\frac{4}{3}$ and 2.
\end{Theorem}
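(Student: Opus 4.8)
Here the statement has two halves: an upper bound (every throughput-optimal technique has APDD approximation ratio at most $2$) and a lower bound (no throughput-optimal technique can achieve a ratio below $\tfrac43$).

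For the upper bound, the plan is to dominate any throughput-optimal technique $T$ by RLNC realization by realization. Every coded packet of a throughput-optimal technique is innovative to every still-unfinished receiver, so in a fixed erasure pattern receiver $r_n$ accumulates its $w_n$ independent equations---hence finishes---at exactly the same slot $U_n$ under $T$ as under RLNC. But RLNC decodes all $w_n$ wanted packets of $r_n$ at that single slot $U_n$ (block decoding), whereas under $T$ each wanted packet of $r_n$ is decoded no later than $U_n$; hence $u_{n,k}\le U_n=u_{n,k}^{\mathrm{RLNC}}$ for all $n,k$, so $D_T\le \Dr$ in every realization and therefore $E[D_T]\le E[\Dr]\le 2\,E[\Dmin]$ by Theorem~\ref{theo:rlnc}. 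As $T$ was arbitrary, every throughput-optimal technique is an APDD $2$-approximation.

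For the lower bound, the plan is to exhibit a family of instances---no erasures needed---on which \emph{every} throughput-optimal technique is a factor arbitrarily close to $\tfrac43$ worse than the optimum. Take $K=2$ data packets $p_1,p_2$, one receiver $r_a$ wanting only $\{p_1\}$, one receiver $r_b$ wanting only $\{p_2\}$, and $t$ further receivers each wanting $\{p_1,p_2\}$. On the throughput-optimal side, innovativeness to $r_a$ (resp.\ $r_b$) forces the first coded packet to involve $p_1$ (resp.\ $p_2$), so that packet is a two-variable equation for every one of the $t$ receivers; none of them can decode in slot $1$, and since each subsequent packet supplies at most one new equation they all decode both their packets only in slot $2$. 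Thus every throughput-optimal technique gives $E[D]=\frac{1+1+4t}{2+2t}=\frac{4t+2}{2t+2}$. On the optimum side, the (throughput-suboptimal) schedule ``send $p_1$, then $p_2$'' gives $r_a$ delay $1$, $r_b$ delay $2$, and each of the $t$ receivers delays $1$ and $2$, so its APDD is $\frac{1+2+3t}{2+2t}=\tfrac32$; a short argument---each of the $t$ receivers needs two slots so has APDD at least $\tfrac32$, and one cannot instantly serve $r_a$, $r_b$ and all $t$ receivers in slot $1$---then shows $\Dmin=\tfrac32$ for $t\ge2$. Hence $E[D]/E[\Dmin]=\frac{4t+2}{2t+2}\big/\tfrac32=\frac{8t+4}{6t+6}\to\tfrac43$ as $t\to\infty$, so no throughput-optimal technique can be an APDD $\rho$-approximation for any $\rho<\tfrac43$. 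Combining the two halves proves the theorem.

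The main obstacle is the lower-bound half, and within it the assertion that \emph{every} throughput-optimal technique is pinned to $\frac{4t+2}{2t+2}$ on this family: one must verify that, over any field size and any (possibly adaptive) coding strategy, innovativeness to $r_a$ and $r_b$ really forces a full-support first packet and, more generally, that throughput-optimality leaves no slack to give the $t$ receivers an early decoding, so clever coefficient choices and reorderings have to be ruled out. One also has to confirm rigorously that $\Dmin=\tfrac32$ here and that $\tfrac43$, not a smaller constant, is the correct limiting ratio---which is why the construction is tuned to two data packets with a growing number of ``heavy'' receivers rather than, say, one heavy receiver wanting many packets.
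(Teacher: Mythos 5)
Your proposal is correct and follows essentially the same route as the paper: the upper bound comes from dominating any throughput-optimal technique by RLNC (each receiver finishes at the same slot as under RLNC, and every individual decoding occurs no later than that slot, so Theorem~\ref{theo:rlnc} gives the factor $2$), and the lower bound uses exactly the paper's instance of two data packets, two singleton receivers, and many receivers wanting both, with $\Dmin=\tfrac32$ and the throughput-optimal APDD tending to $2$. Your write-up is somewhat more careful than the paper's (which leaves the upper-bound half to the informal remark preceding the theorem), but there is no substantive difference in approach.
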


\begin{proof}
	Since approximation ratio is the highest ratio among all settings, to prove that $\beta\geqslant \frac{4}{3}$, it suffices to show an instance in which the APDD of all throughput-optimal LNC techniques is $\frac{4}{3}$ times of the minimum.
	
	Consider an instance $\mA$ of the APDD minimization problem with $2$ data packets and $N$ receivers. Receiver $ r_1$ only wants $\p_1$, receiver $ r_2$ only wants $\p_2$, and all the remaining $N-2$ receivers want both $\p_1$ and $\p_2$. Further assume that $\{P_{e,n}\}_{n=1}^N=0$, so that $E[D]=D$. When $N$ is sufficiently large, the APDD is minimized if $\p_1$ and $\p_2$ are transmitted separately using two transmissions. This yields $\Dmin=1.5$. On the other hand, if throughput-optimal techniques are applied, $\p_1$ and $\p_2$ must be combined in the first transmission due to $ r_1$ and $ r_2$, which does not allow the remaining $N-2$ receivers to decode in the first transmission. These receivers can only decode in the second transmission. Consequently, the APDD is 2 when $N$ is sufficiently large, which is $\frac{4}{3}$ times of the minimum.
\end{proof}

\subsection{IDNC}

A common feature of IDNC techniques, including strict IDNC (S-IDNC) \cite{Rozner_Heuristic_clique,sundararajan:sadeghi:medard:2009,yu2015SIDNC} and general IDNC (G-IDNC) \cite{sorour:valaee:2010,li2011capacity,le2013instantly,sorour2015completion}, is that every coded packet is the binary XOR of a subset of the data packets. Another common feature of IDNC techniques is ``memoryless decoding'':  the receivers discard the coded packets that are innovative but not instantly decodable rather than storing them for future decodings\footnote{{There are recent works in IDNC literature that allow the receiver to store at least one innovative but not instantly decodable packet \cite{neda:parastoo:o2idnc}.}}. These features reduce decoding complexity, and motivates the sender to send coded packets that are instantly decodable to as many receivers as possible. Such decoding helps reduce APDD, but at a cost of an unbounded degradation of the throughput:

\begin{Theorem}\label{theo:mem_weak_thpt}
	IDNC techniques are not throughput-approximation techniques.
\end{Theorem}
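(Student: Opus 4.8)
The plan is to rule out every constant $\beta$ by exhibiting, for each $\beta$, a single instance on which every IDNC technique uses more than $\beta$ times the optimal number of transmissions. Throughout I take $\{P_{e,n}\}=0$, so that $E[U]=U$. Given an integer $n\geq 3$, let $K=n$ with data packets $\{\p_1,\dots,\p_n\}$, and for each pair $1\leq i<j\leq n$ introduce a receiver $r_{ij}$ whose wanted set is exactly $\{\p_i,\p_j\}$; hence $N=\binom{n}{2}$ and every receiver wants two packets. Each receiver then needs two innovative receptions, so $E[U_{\min}]\geq 2$; and since RLNC is throughput-optimal and lets a receiver wanting two packets finish after two coded receptions, $E[U_{\min}]=2$. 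The whole work is in lower-bounding the completion time of an arbitrary IDNC technique on this instance.

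First I would record the decodability fact used repeatedly: if $r_{ij}$ currently knows exactly $\P\setminus\{\p_i,\p_j\}$, then a binary-XOR packet with coding set $\M$ is instantly decodable to $r_{ij}$ iff $|\M\cap\{\p_i,\p_j\}|=1$; if $|\M\cap\{\p_i,\p_j\}|=2$ it is innovative but not instantly decodable and is discarded by memoryless decoding; and if the intersection is empty it is not innovative. In the last two cases $r_{ij}$'s stored knowledge does not change. The core is then the following invariant, proved by induction on the transmission index $t$: after any $t$ transmissions there is a set $S_t\subseteq\P$ with $|S_t|\geq n/2^{t}$ such that every receiver $r_{ij}$ with $\p_i,\p_j\in S_t$ has decoded none of its packets and still knows exactly $\P\setminus\{\p_i,\p_j\}$. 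The base case is $S_0=\P$. For the step, let $\M_{t+1}$ be the coding set of the $(t+1)$-th transmission, split $S_t$ into $A=\M_{t+1}\cap S_t$ and $B=S_t\setminus\M_{t+1}$, and let $S_{t+1}$ be the larger one, so $|S_{t+1}|\geq n/2^{t+1}$; every $r_{ij}$ with $\p_i,\p_j\in S_{t+1}$ has $|\M_{t+1}\cap\{\p_i,\p_j\}|\in\{0,2\}$, hence decodes nothing and is left with the same knowledge, preserving the invariant. (The rule that coded packets exclude packets decoded by all receivers only constrains the sender further.) Finally, if the broadcast is complete after $U$ transmissions, then no $r_{ij}$ with $\p_i,\p_j\in S_U$ can exist, so $|S_U|\leq 1$; with $|S_U|\geq n/2^{U}$ this forces $U\geq\log_2 n$.

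Putting the pieces together, every IDNC technique has $E[U]\geq\lceil\log_2 n\rceil$ on this instance whereas $E[U_{\min}]=2$, so the approximation ratio on it is at least $\tfrac12\lceil\log_2 n\rceil$, which exceeds any prescribed $\beta$ once $n>2^{2\beta}$. Hence no IDNC technique is a throughput $\beta$-approximation for any constant $\beta$.

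I expect the inductive step of the invariant to be the main obstacle. The coding set $\M_{t+1}$ may be chosen by an arbitrary, feedback-driven, possibly heuristic IDNC rule, so the argument must show that \emph{no} choice can avoid leaving a large untouched sub-instance of the same complete-graph shape --- the pigeonhole split of $S_t$ into $A$ and $B$ is what makes the bound adversary-proof. One must also check carefully that memoryless decoding really freezes the knowledge of the untouched receivers at $\P\setminus\{\p_i,\p_j\}$, since otherwise the discarded innovative packets could be reused later. Both points use only the two defining features of IDNC, binary-XOR coding and memoryless decoding, so the write-up should emphasize that these are the only properties invoked, which is exactly why the conclusion applies to every IDNC technique, including all S-IDNC and G-IDNC variants.
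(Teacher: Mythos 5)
Your proposal is correct and follows essentially the same route as the paper: the same instance (every pair of the $K$ packets wanted by a distinct receiver, so $U_{\min}=2$) and the same core observation that each binary XOR leaves a complete sub-instance on at least half of the surviving packets untouched, forcing $U=\Omega(\log_2 K)$. Your pigeonhole invariant on $S_t$ is simply a cleaner, adversary-proof formalization of the layered sub-SFM decomposition the paper uses to reach the same $\lceil\log_2 K\rceil$ bound.
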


\begin{proof}
	If IDNC techniques are throughput-approximation techniques, then when there are not packet erasures, the BCT $U_I$ provided by IDNC techniques should be within a constant multiple of the minimum BCT $\Umin$ that any LNC techniques can offer. Thus, to prove the theorem, we only need a counter example in which $U_I$ is not within a constant multiple of $\Umin$.
	
	Our counter example involves two types of SFMs:
	\begin{itemize}
		\item $\mA_1(K)$: every pair of data packets is wanted by a different receiver. Thus, there are $N=\frac{K(K-1)}{2}$ receivers in total;
		\item $\mA_2(K,m)$: every data packet is wanted by $m$ different receivers. Every pair of data packets is wanted by a different receiver. Thus, there are $N=mK+\frac{K(K-1)}{2}$ receivers in total.
	\end{itemize}
	
	Note that $\Umin=2$ for $\mA_1(K)$, which can be achieved by any throughput-optimal LNC techniques. We prove the theorem by proving that $U_I=\left\lceil\log_2K\right\rceil+1$ for $\mA_1(K)$, where $\lceil x \rceil$ denotes the smallest integer greater than $x$.
	
	Given $\mA_1(K)$, the transmission starts by sending as $\cc_1$ the XOR of an arbitrary $m_1 \geqslant 1$ data packets in $\mA_1(K)$. The resulted SFM consists of two sub-SFMs: 1) an $\mA_1(m_1)$, which contains the $m_1$ data packets and the receivers who want 2 data packets from $\cc_1$ and thus discard $\cc_1$; 2) an $\mA_2(K-m_1, m_1)$, which contains the remaining $K-m_1$ data packets and the remaining receivers, which either has decoded one wanted data packet from $\cc_1$ and still wants one data packet from $K-m_1$, or wants 2 data packets from $K-m_1$.
	
	These two sub-SFMs are independent in the sense that an IDNC coded packet of $\mA_1(m_1)$ and an IDNC coded packet of $\mA_2(K-m_1,m_1)$ can be XOR-ed and sent without affecting their decodability at their targeted receivers.
	
	Similarly, we can show that after sending the XOR of any arbitrary $m_3$ data packets from $A_2(K-m_1, m_1)$, the resulted SFM consists of two independent sub-SFMs: an $\mA_1(m_3)$ and an $\mA_2(K-m_1-m_3,m_1+m_3)$.
	
	Continuing the logic, after the $u$-th transmission ($u\geqslant 1$), $\mA_1(K)$ is split into $2^{u-1}$ type-1 sub-SFMs and $2^{u-1}$ type-2 sub-SFMs. Only sub-SFMs that consists of a single data packet can be completed in one coded transmission and be removed. The evolution of $\mA_1(K)$ is demonstrated in a layered graph in Fig. \ref{fig:memoryless_decoding}. The $u$-th layer corresponds to the SFM before the $u$-th coded transmission. The total number of coded transmissions is thus the number of layers plus one. It is clear that the minimum number of layers is $\left\lceil\log_2K\right\rceil$, which is achieved by XOR-ing half of the data packets from each sub-SFM. Thus, $U_I=\left\lceil\log_2K\right\rceil+1$.
\end{proof}

\begin{figure}[t]
	\includegraphics[width=\linewidth]{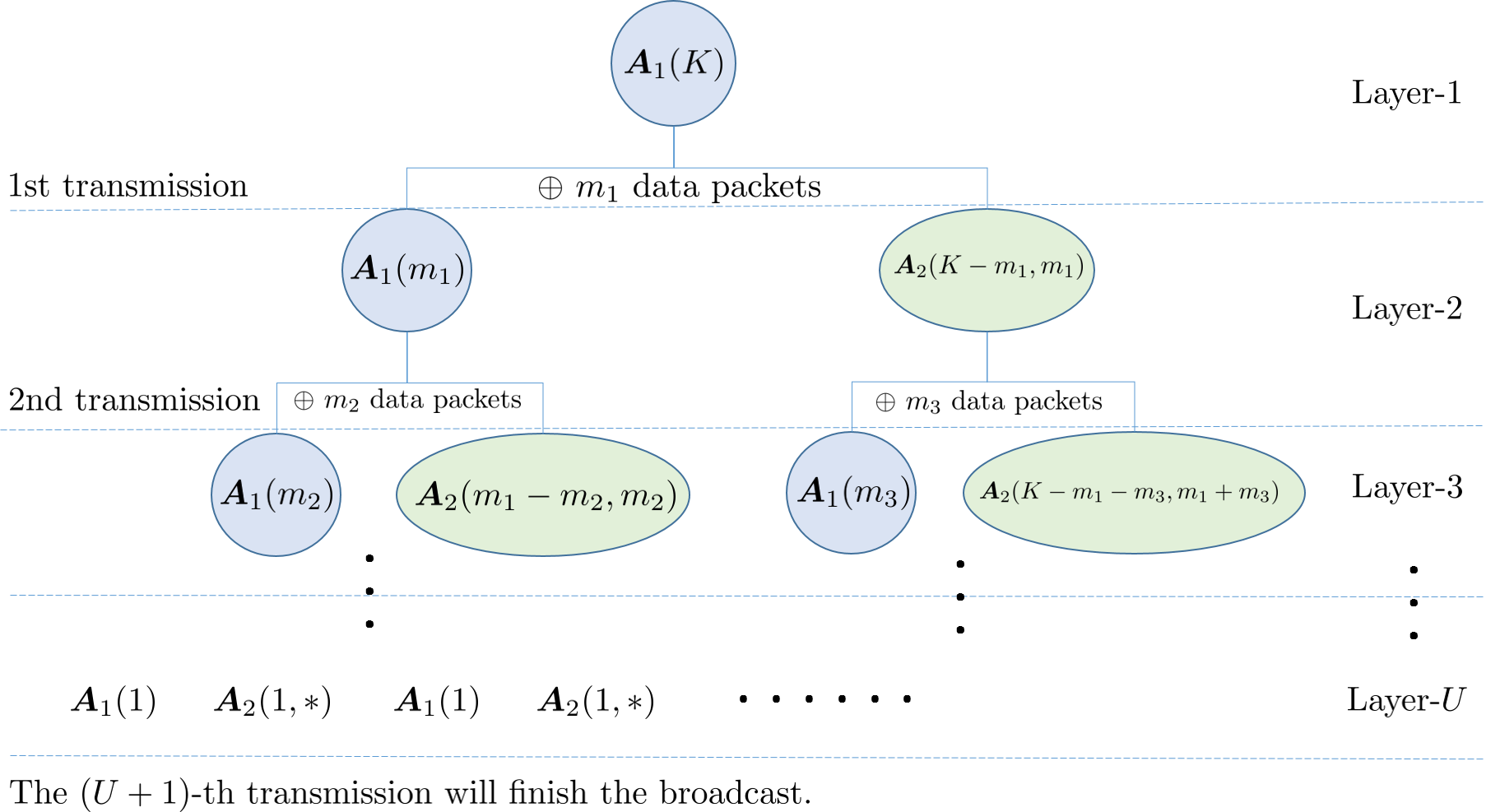}
	\caption{The decoding evolution of $\mA_1(K)$. (Note that '*'s are arbitrary positive integers whose values depend on the coded packets.)}
	\label{fig:memoryless_decoding}
\end{figure}

Our theorem indicates that all IDNC techniques are only heuristics for BCT minimization and throughput maximization. {A large BCT can, in turn, hurts APDD, as there are data packets decoded with large delays. In the example raised in the above proof, when IDNC techniques are applied, there are receivers who decode their second wanted data packet after $\left\lceil\log_2K\right\rceil+1$ transmissions. Even if they decode their first wanted packet after the first transmission, their APDD is still as high as $\left\lceil\log_2K\right\rceil/2+1$, which can be unlimitedly larger than an APDD of 2 offered by RLNC. Therefore, IDNC techniques cannot approximate the APDD of at least a subset of the receivers. However, it is still an open problem whether IDNC techniques can approximate the APDD across all receivers.}

\section{An Improved APDD Approximation LNC Technique}\label{sec:framework}
In this section, we propose hypergraphic LNC (HLNC), a novel low complexity throughput-optimal and APDD-approximation technique built upon a novel hypergraph model of receivers' packet reception state. Coded packets generated by HLNC have the following two features: 
\begin{enumerate}
	\item every coded packet is innovative to every receiver; and
	\item every coded packet is able to offer at least one early packet decoding.
\end{enumerate}

The first feature ensures the throughput-optimality and the APDD-approximation of HLNC. The second feature further ensures that both the APDD and decoding complexity of HLNC are lower than RLNC, as RLNC requires block decoding.

\subsection{The Basic Form of HLNC}
We introduce our technique by first generalizing the concept of \emph{Wants} set $\W_n$: 
\begin{Definition}
	The Wants set $\W_n$ of a receiver $ r_n$ is the set of data packets not yet decoded by $ r_n$.
\end{Definition}
The subtle yet important difference between this new definition and the previous one is that the new one explicitly includes in $\W_n$ the received (from coded packets) but undecodable data packets.

We then model the packet reception state $\{\W_n\}$ using the following hypergraph.
\begin{Definition}\label{def:H_new}
	In the hypergraph model $\mH(\V,\E)$ of the receivers' packet reception state, each vertex $v\in\V$ represents a data packet, i.e., $v_k\leftrightarrow\p_k$.  Each hyperedge $\e\in\E$ represents the Wants set of a receiver, i.e, $\e_n\leftrightarrow \W_n$, by connecting the data packets/vertices in $\W_n$.
\end{Definition}

An example of $\mH$ is plotted in \figref{fig:h_color}, where 6 data packets and 3 receivers are modelled as a hypergraph with 6 vertices and 3 hyperedges. 

We further denote by $\V_c$ a minimal vertex cover of $\mH$, and denote the corresponding packet set as $\M_c$. $\V_c$ is a subset of $\V$ satisfying that 1) it is incident to every hyperedge, i.e., $|\e_n\cap\V_c|>0~\forall n\in[1,N]$, and 2) there is a least one single incidence, i.e. $\exists n:|\e_n\cap\V_c|=1$. Due to these two features, a properly generated (will be discussed soon) coded packet of $\M_c$ has the following two properties: 

\begin{enumerate}
	\item It is innovative to every receiver, because every receiver wants at least one data packet from $\M_c$, i.e., $|\W_n\cap\M_c|>0~\forall n\in[1,N]$; 
	\item It allows at least one receiver to instantly decodable one wanted data packet, because there is at least one receiver who only wants one data packet from $\M_c$, i.e., $\exists n:|\W_n\cap\M_c|=1$.
\end{enumerate}

Therefore, the core of HLNC is to keep updating $\mH$ and sending coded packets generated using $\V_c$. The basic HLNC technique is outlined in Algorithm \ref{alg:hlnc}. A complete example of the basic HLNC is given in Example \ref{example:hlnc} at the end of this section.
\begin{algorithm}[h]
	\caption{HLNC Wireless Broadcast}
	\begin{algorithmic}[1]
		\STATE Input: the initial packet reception state $\mH$;
		\WHILE  {Not all receivers have decoded all wanted data packets} 
		\STATE The sender updates $\mH$;
		\STATE The sender broadcasts a properly generated coded packet $\cc$ of a minimal vertex cover $\V_c$ of $\mH$;
		\STATE Every receiver that receives this coded packet tries to decode by solving linear equations(s).
		\ENDWHILE
	\end{algorithmic}
	\label{alg:hlnc}
\end{algorithm}

Coded packets must be properly generated to ensure their innovativeness to every receiver. This can be easily accomplished either asymptotically, by random coefficients chosen from a large $\F_q$ (as RLNC), or deterministically, by coefficients generated using deterministic LNC techniques such as \cite{kwan2011generation}\footnote{We note that the coding set selection strategies of these techniques are completely different from HLNC. For example, RLNC selects all the data packets, whilst the technique in \cite{kwan2011generation} selects through solving an NP-hard hitting set problem}.

On the other hand, a minimal vertex cover can be found in polynomial-time by simple heuristic algorithms. We propose  such an algorithm in Algorithm \ref{alg:dmin}, which prioritizes data packets based on their popularity.

{One way to understand HLNC is that HLNC aims at achieving throughput optimality by encoding the minimum number of data packets together. Such minimization will minimize the intersection between the coding set and receivers' Wants sets to a minimum of one data packet, which makes such data packets instantly decodable to the corresponding receivers. We note that although \cite{keller2008online} has a similar aim, the way how the coding set is determined is completely different, and there is no claim on guaranteed instantly packet decodings nor guaranteed APDD approximation.}

\begin{figure}
	\centering
	\subfigure[SFM $\mA$]{\includegraphics[width=0.25\linewidth]{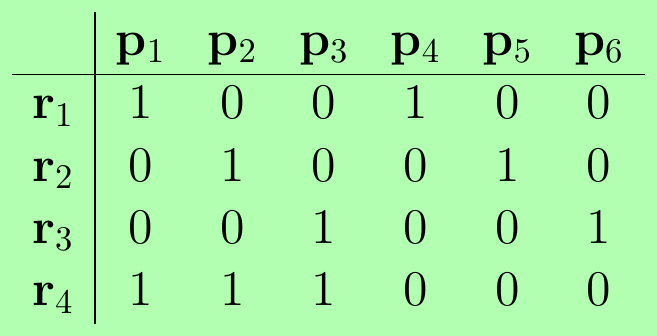}\label{fig:sfm1}}
	\subfigure[Original $\mH$]{\includegraphics[width=0.2\linewidth]{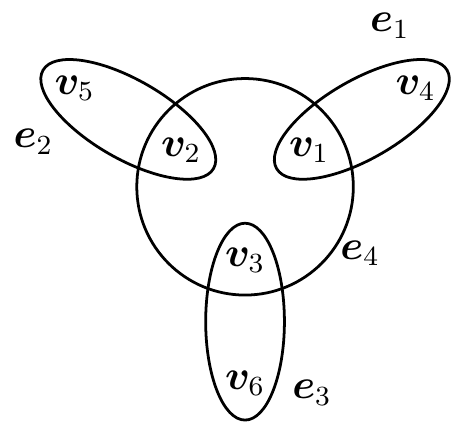}\label{fig:h1}}\hspace{10pt}
	\subfigure[updated $\mH'$ after sending $\a_1\p_1+\a_2\p_2+\a_3\p_3$]{\includegraphics[width=0.2\linewidth]{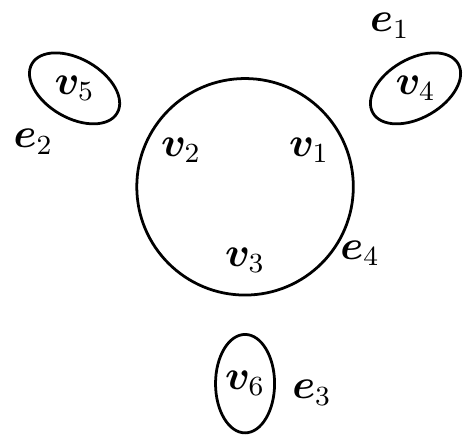}\label{fig:h2}}\hspace{10pt}
	\subfigure[updated $\mH''$ after sending $\a'_1\p_1+\a'_4\p_4+\a'_5\p_5+\a'_6\p_6$]{\includegraphics[width=0.15\linewidth]{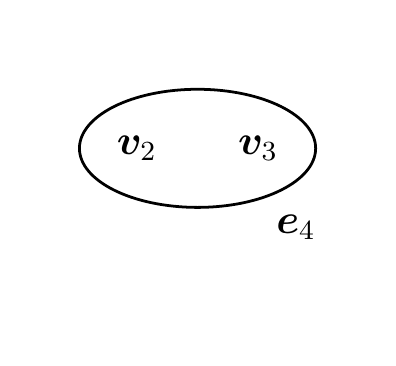}\label{fig:h3}}
	\caption{Hypergraph update}
\end{figure}
\begin{algorithm}[t]
	\caption{Find a minimal hypergraph vertex cover}
	\begin{algorithmic}[1]
		\STATE Input: A hypergraph $\mH(\V,\E)$, an empty vertex set $\V_c$;
		\STATE Weigh each vertex with the number of hyperedges incident to it; (\emph{The weight of a vertex is indeed the number of receivers which want the corresponding data packet.})
		\WHILE {there are still vertices in $\mH$}
		\STATE Add to $\V_c$ the vertex with the largest weight;
		\STATE Update $\mH$ by: 1) removing this vertex from $\mH$; 2) removing from $\mH$ all hyperedges incident to this vertex; 3) removing from $\mH$ all vertices that do not have any hyperedge incident to them;
		\ENDWHILE
		\STATE Output $\V_c$ as a minimal hypergraph vertex cover.
	\end{algorithmic}
	\label{alg:dmin}
\end{algorithm}

\begin{Example}\label{example:hlnc}
	Consider the SFM in \figref{fig:sfm1}, with its hypergraph $\mH$ plotted in \figref{fig:h1}. For simplicity we assume erasure-free transmissions. Then, the transmissions using HLNC is as follows:
	\begin{enumerate}
		\item A minimal vertex cover of $\mH$ is $\V_c=\{v_1,v_2,v_3\}$. Thus, $\X=\alpha_1\p_1+\alpha_2\p_2+\alpha_3\p_3$ is sent, where $\{\a_1,\a_2,\a_3\}$ are coefficients chosen from $\F_q\setminus\{0\}$. Since receivers $ r_1,~ r_2,~ r_3$ only want $\p_1,~\p_2,~\p_3$ from $\X$, respectively, they can decode them from $\X$. Hence, $v_1,~v_2,~v_3$ are removed from $\e_1,~\e_2,~\e_3$, respectively. On the other hand, $ r_4$ cannot decode any data packet. It only holds $\X$, and thus $\e_4$ is still incident to $\{v_1,v_2,v_3\}$. The updated graph $\mH'$ is plotted in \figref{fig:h2};
		\item A minimal vertex cover of $\mH'$ is $\V_c'=\{v_1,v_4,v_5,v_6\}$. Thus, $\X'=\alpha'_1\p_1+\alpha'_4\p_4+\alpha'_5\p_5+\alpha'_6\p_6$ is sent. Since receivers $ r_1,~ r_2,~ r_3,~ r_4$ only want $\p_4,~\p_5,~\p_6,~\p_1$ from $\X'$, respectively, they can decode them from $\X'$. Hence, $v_4,~v_5,~v_6,~v_1$ are removed from $\e_1,~\e_2,~\e_3,~\e_4$, respectively. Since receivers $\{ r_1, r_2, r_3\}$ are satisfied, $\{\e_1,\e_2,\e_3\}$ are completely removed from $\mH$. On the other hand, $ r_4$ holds a linear equation of $\alpha_2\p_2+\alpha_3\p_3=\X-\alpha_1\p_1$, where the value of the RHS is known to $ r_4$. Since $ r_4$ still wants $\{\p_2,\p_3\}$, $\e_4$ is incident to $\{v_2,v_3\}$. The updated graph $\mH''$ is plotted in \figref{fig:h3}.
		\item A minimal vertex cover of $\mH''$ is $v_2$. Thus, $\p_2$ is sent alone, which will allow $ r_4$ to obtain $\p_2$, and then decode $\p_3$ from equation $\alpha_2\p_2+\alpha_3\p_3=\X-\alpha_1\p_1$. The coded broadcast will then be completed.
	\end{enumerate}
\end{Example}


%
%
\subsection{Implementation under Different Feedback Availability}
The basic HLNC proposed in Algorithm \ref{alg:hlnc} works when the sender can update $\mH$ after every transmission. This requires fully-online receiver feedback, namely every receiver sends a feedback every time it receives a coded packet. However, it could be expensive or even impossible to collect feedback after every transmission. For example, in time-division-duplex systems the sender has to stop and listen to the feedback \cite{lucani:tdd_rlnc:2009,lucani:tdd_field:2009,lucani:medard:stojanovic:2010,lucani:medard:stojanovic:2012}. Thus, it is desirable to develop implementations of HLNC with semi-online feedback and without intermediate feedback at all (off-line). The challenge here is how to update $\mH$ locally without losing throughput-optimality and APDD-approximation.

We solve this challenge by our semi-online HLNC proposed in Algorithm \ref{alg:hlnc_semi}. At a high level, we partition the transmissions into semi-online rounds. During each semi-online round (Step 5-9 in Algorithm \ref{alg:hlnc_semi}), the sender iteratively broadcasts a coded packet $\cc$ and then update $\mH$ locally by assuming $\cc$ is received by all receivers. The iteration will be terminated if the reception of $\cc$ will allow one receiver to fully decode all its wanted data packets. After the termination, the sender will collect receiver feedback to correctly update $\mH$, and then enter the next semi-online round.

Due to this termination condition, the locally updated $\mH$ always has the same number of hyperedges/receivers as the actual packet reception state $\mH'$ at the receivers with. Moreover, since we update $\mH$ by assuming all coded packets are received by all receivers, the vertices removed from $\mH$ during the updates are at least as many as $\mH'$. Thus, $\mH$ is also a subgraph of $\mH'$. Consequently, $\mH$ and $\mH'$ have the following easily-proved property:
\begin{property}
	If $\mH$ is a subgraph of $\mH'$ with the same number of hyperedges, then any minimal vertex cover $\V_c$ of $\mH$ is also a minimal vertex cover of $\mH'$.
	\label{prop:mvc_sub}
\end{property}

This property immediately indicates that the coded packets generated by semi-online HLNC have the same features as those generated by the fully-online basic HLNC, namely, 1) innovative to every receiver, and 2) always provide early packet decodings. This is because the minimal vertex covers we found in each semi-online round are also minimal vertex covers of the actual $\mH'$ at the receivers.

Therefore, semi-online HLNC is throughput-optimal and APDD-approximating as its fully-online counterpart. As we will see in the simulation results presented in the next section, there is no visible performance difference between the two.

Based on semi-online HLNC, we can also develop an offline HLNC by not collecting receiver feedback after the first semi-online round, but using the classic RLNC in the remaining transmissions. As we will see in the simulation results presented in the next section, even off-line HLNC can always provide lower APDD than RLNC.

{Moreover, HLNC is very robust to feedback loss. If the feedback from a subset of receivers is lost, the sender can simply update the hypergraph by assuming that these receivers have not received any previous packets.
	
	Last but not least, HLNC is fully compatible with RLNC and offers seemless switch. The sender can keep sending RLNC packets until one round of receiver feedback is received. Based on the feedback, the sender conducts one HLNC semi-online round, and then switches back to RLNC until the reception of the next round of receiver feedback.}

\begin{algorithm}[h]
	\caption{HLNC Wireless Broadcast with Semi-online Feedback}
	\begin{algorithmic}[1]
		\STATE Input: the initial packet reception state $\mH$;
		\WHILE  {Not all receivers have decoded all wanted data packets} 
		\STATE The sender updates $\mH$ by collecting receiver feedback (waived for the first transmission);
		\STATE The sender broadcasts a properly generated coded packet $\cc$ of a $\V_c$ of $\mH$;
		\WHILE{No receiver can fully decode all its wanted data packets after receiving $\cc$}
		\STATE The sender updates $\mH$ locally by assuming $\cc$ is received by all receivers;
		\STATE The sender broadcasts a properly generated coded packet $\cc$ of a $\V_c$ of $\mH$;
		\STATE Every receiver that receives this coded packet tries to decode by solving linear equations(s).
		\ENDWHILE
		\ENDWHILE
	\end{algorithmic}
	\label{alg:hlnc_semi}
\end{algorithm}

\section{Simulation Results}\label{sec:simulation}

In this section, we numerically compare the APDD performance of the proposed technique with some existing techniques, as well as the lower bound on APDD. In total, there are 6 different APDD we will compare. They are abbreviated and explained as follows:
\begin{enumerate}
	\item ``Fully-'': the  APDD of fully-online HLNC;
	\item ``Semi-'': the APDD of semi-online HLNC;
	\item ``Off-'': the APDD of offline HLNC;
	\item ``$E[\ud$]'': the lower bound on the expected APDD of LNC derived in \eqref{eq:ud_approx};
	\item ``$E[D_R]$'': the expected APDD of RLNC derived in \eqref{eq:dr_approx};
	\item ``G-IDNC'': the APDD of a heuristic GIDNC algorithm adapted from \cite{sorour:valaee:2010} when fully-online feedback is collected. We note that there has not been any optimal G-IDNC algorithms in the literature.
\end{enumerate}
In our simulations, there are $K=15$ data packets, $N\in[5,100]$ receivers. The packet erasure probabilities are $\{P_{e,n}\}_{n=1}^N=0.2$. For each value of $N$, we simulate the broadcast of $10^5$ packet blocks, and then make average on their APDD. The simulation results are plotted in \figref{fig:d_erasure}, from which we observe that:
\begin{itemize}
	\item The APDD performance of our technique outperforms the existing techniques. This superiority holds even when our technique is implemented under the off-line scheme;
	\item The fully- and semi-online schemes share the same performance. This result matches our expectation. Their performance is better than the off-line one;
	\item The APDD of RLNC is always within a constant factor of the lower bound, indicating that RLNC is an approximation technique, and so is our technique. On the other hand, the APDD of the heuristic G-IDNC is unbounded, indicating that it is not an approximation technique. We further note that one should not read the approximation ratio from the figure, because the approximation ratio is achieved when $K$ is sufficiently large, as have discussed in Section \ref{sec:rlnc_idnc} before Theorem \ref{theo:rlnc}.
\end{itemize}
\begin{figure}[ht]
	\centering
	\includegraphics[width=0.7\linewidth]{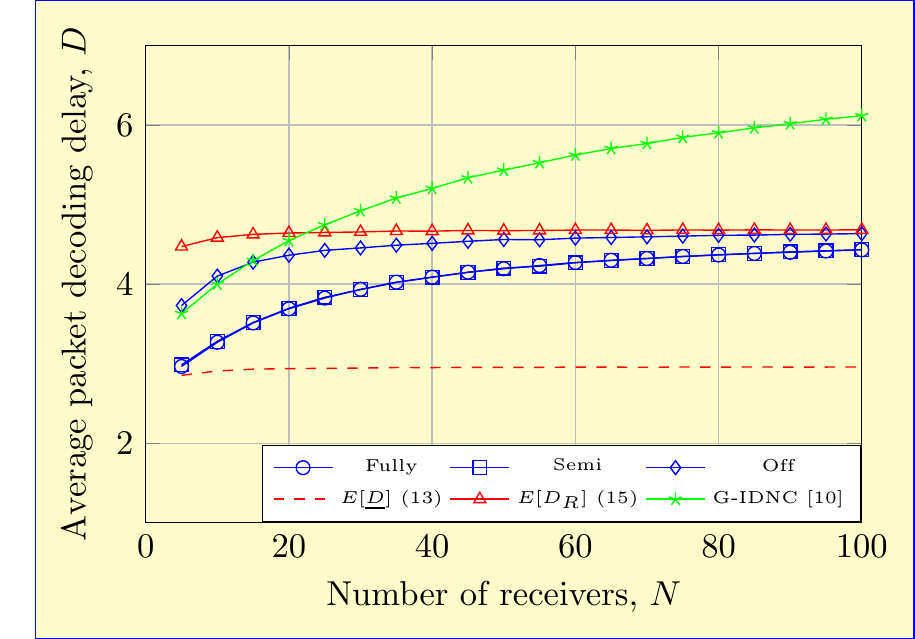}
	\caption{The APDD performance of different NC techniques.}
	\label{fig:d_erasure}
\end{figure}

In addition, we compare the amount of feedback collected under fully- and semi-online HLNC. The results are plotted in \figref{fig:feedback}. We observe that semi-online HLNC can reduce up to 30\% feedback from the fully-online one when the number of receivers is small. The reduction becomes marginal with increasing number of receivers because the probability of having a receiver who only wants one data packet after a certain semi-online round increases. When this happens, semi-online HLNC has to collect feedback after only one transmission according to Algorithm \ref{alg:hlnc_semi}, which makes it equivalent to fully-online HLNC. 

\begin{figure}[ht]
	\centering
	\includegraphics[width=0.7\linewidth]{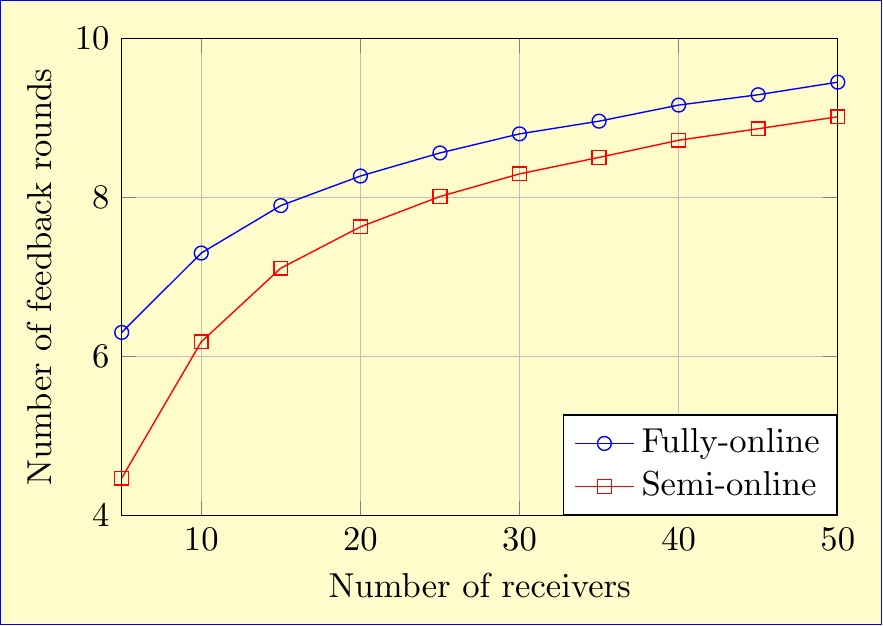}
	\caption{The Amount of feedback collected under the fully- and semi-online schemes.}
	\label{fig:feedback}
\end{figure}

\section{Conclusion}
In this paper, we conducted a comprehensive study on the APDD performance of LNC techniques in wireless broadcast with packet erasures. By deriving lower bounds on the expected APDD of LNC techniques using a conceptual perfect LNC technique, we showed that the APDD of efficient LNC techniques should not scale up with increasing number of receivers. Then by proving the NP-hardness of achieving these lower bounds, we proved the NP-hardness of the APDD minimization problem.

Although the optimal APDD is intractable to achieve, we proved that it can be approximated with a ratio between 4/3 and 2 by throughput-optimal LNC techniques. Therefore, throughput and APDD can be jointly improved rather than trading each other off. However, such joint improvement does not necessarily hold for every LNC technique. {For example, we proved that all IDNC techniques cannot approximate the optimal throughput, and such sub-optimality in turn makes IDNC techniques unable to guarantee an APDD approximation ratio for at least a subset of receivers.}

We then proposed HLNC, a novel hypergraph-based LNC technique that combines all the advantages of RLNC and IDNC: It is throughput-optimal, APDD-approximating, computational friendly, and it always provides instant packet decodings. Moreover, its performance does not degrade even if receiver feedback is not collected after every transmission. Our extensive simulations showed that the APDD of HLNC outperforms RLNC and a heuristic general IDNC technique under all tested system parameter settings.

In the future, we are interested in extending our hypergraph model and APDD analysis to other network coding and index coding problems. We are also interested in applying our technique to other network models such as cooperative data exchange \cite{el2010coding,sprintson2010randomized}.

\appendices
\section{Proof of Lemma \ref{lemma:strong_color}}\label{ap:hardness}

We prove that it is NP-complete to determine whether an $r-$uniforma hypergraph is size-$r$ strong colorable or not, for any $r\geqslant 3$. Our method is a reduction from the $k$-coloring problem of graphs.

Given an arbitrary graph $\G(\V,\E)$. For every edge $\e_n=(v_i,v_j)$ we construct a hyperedge $\e_n'$ by adding $r-2$ dummy vertices $v^{n,1}\cdots v^{n,r-2}$ to $\e_n$. The result is an $r$-uniform hypergraph $\mH(\V',\E')$ that has $|\V'|=|\V|+|\E|\cdot (r-2)$ vertices. If $\G$ can be colored using $r$ colors, then in any hyperedge $\e_n'=(v_i,v_j,v^{n,1}\cdots v^{n,r-2})$, $v_i$ and $v_j$ are colored differently using 2 colors. By assigning the remaining $r-2$ colors to the $r-2$ dummy vertices in $\e_n'$ greedily, all vertices in $\e_n'$ are colored differently. We thus obtain a size-$r$ strong coloring of $\mH$. On the other hand, if $\mH$ can be strong colored using $r$ colors, then by removing all the dummy vertices, we obtain an $r$-coloring of $\G$. It is well known that it is NP-complete to determine whether a graph is $r$ colorable or not, for any $r\geqslant 3$. Hence, it is NP-complete to determine whether an  $r$-uniform hypergraph is size-$r$ strong colorable. $\blacksquare$

\section{Proof of Theorem \ref{theo:ud_n}}\label{ap:ud_n}
We first assume that $ r_n$ is satisfied (i.e., obtains all data packets) after $U_n$ coded transmissions. Hence the packet reception pattern $\bu_n$ of $ r_n$ takes a form of $[u_1,\cdots,u_{w_n-1},U_n]$. Let $\bu'=[u_1,\cdots,u_{w_n-1}]$, then the APDD under a given
$\bf u$ is $(\|{\bf u'}\|+U_n)/w_n$. It is intuitive that
all the possible $\bf u'$ happens with the same probability. Denote the set of all possible $\bf u'$ by $\mathcal U$,
then the expected lower bound under given $\w_n$ and $U_n$, denoted by $E[\ud_n|(U_n,w_n)]$, is calculated as:
\begin{align}\nonumber
	E[\ud_n|(U_n,w_n)]&=\frac{1}{|\mathcal U|}\sum_{{\bf u'}\in\mathcal U}\frac{\|{\bf u'}\|+ U_n}{w_n}\\
	&=\frac{U_n}{w_n}+\frac{1}{|\mathcal U|}\sum_{{\bf u'}\in\mathcal U}\frac{\|{\bf u'}\|}{w_n}\label{eq:ud_single}
\end{align}

We now show that all $\bf u'$ are symmetric. Given ${\bf u'}=[u_1,\cdots,u_{w_n-1}]$, by letting $u''_i=U_n-u_i$,
the resultant ${\bf u''}=[u''_1,\cdots,u''_{w-1}]$ is the mirror of $\bf u'$ against $U_n/2$. Obviously, $\bu''$ also belongs to $\mathcal U$, and it holds that ${\bf \|{\bf u'}\|+\|{ u''}\|}=(w_n-1)U_n$.
Hence, there are $|\mathcal U|/2$ such pairs, and thus the above equation can be simplified to:
\begin{equation}
	E[\ud_n|(U_n,w_n)]=\frac{U_n}{w_n}+\frac{1}{2}\frac{(w_n-1)U_n}{w_n}=\frac{U_n}{2}+\frac{U_n}{2w_n}
\end{equation}

Then, by noting that the expected number of coded transmissions for a receiver to be satisfied is $E[U_n|w_n]=w_n/(1-P_e)$, we have:
\begin{equation}
	E[\ud_n|w_n]=\frac{E[U_n|w_n]}{2}+\frac{E[U_n|w_n]}{2w_n}=\frac{w_n+1}{2(1-P_e)}
\end{equation}
which proves Theorem \ref{theo:ud_n}.
$\blacksquare$

\section{Proof of Theorem \ref{theo:ud_overall}}\label{ap:ud_overall}
Here we only need to prove that $E\left[\frac{\sum_{n=1}^N w_n^2}{\sum_{n=1}^N w_n}\right]\approx KP_e-P_e+1$ when $N$ is sufficiently large. We first expand $E\left[\frac{\sum_{n=1}^N w_n^2}{\sum_{n=1}^N w_n}\right]$ into its series form:

\begin{equation}
	E\left[\frac{\sum_{n=1}^N w_n^2}{\sum_{n=1}^N w_n}\right]=E\left[\frac{w_1^2}{\sum_{n=1}^N w_n}\right]+E\left[\frac{w_2^2}{\sum_{n=1}^N w_n}\right]+\cdots+E\left[\frac{w_N^2}{\sum_{n=1}^N w_n}\right]
\end{equation}
Since $\{w_n\}_{n=1}^N$ are i.i.d. distributed, the $N$ addends in the above equation have the same value. Thus,

\begin{align}
	E\left[\frac{\sum_{n=1}^N w_n^2}{\sum_{n=1}^N w_n}\right]&=N\cdot E\left[\frac{w_1^2}{\sum_{n=1}^N w_n}\right]\\
	&=N\cdot E\left[\frac{w_1^2}{w_1+\sum_{n=2}^Nw_n}\right]
\end{align}

Then according to the law of larger numbers, the value of $\sum_{n=2}^Nw_n$ will approach to its mean $(N-1)KP_e$ when $N$ is sufficiently large. Thus,
\begin{align}
	E\left[\frac{\sum_{n=1}^N w_n^2}{\sum_{n=1}^N w_n}\right]
	&\approx N\cdot E\left[\frac{w_1^2}{w_1+(N-1)KP_e)}\right],~~~~\mathrm{when~}N~\mathrm{is~sufficiently~large}\\
	&=E\left[\frac{w_1^2}{KP_e+\frac{w_1-KP_e}{N}}\right]\\
	&\approx E\left[\frac{w_1^2}{KP_e}\right],~~~~\mathrm{when~}N~\mathrm{is~sufficiently~large}
\end{align}

Then, since $w_1\sim B(K,P_e)$, we have $E[w_1]=KP_e$ and $Var[w_1]=KP_e-KP_e^2$. Hence, we have $E[w_1^2]=E[w_1]^2+Var[w_1]=K^2P_e^2+KP_e-KP_e^2$, and thus $E\left[\frac{\sum_{n=1}^N w_n^2}{\sum_{n=1}^N w_n}\right]\approx KP_e-P_e+1$. $\blacksquare$
\bibliographystyle{IEEEtran}
\bibliography{IEEEabrv,My_ref}
\end{document}